\algnewcommand\Input{\item[\textbf{Input:}]}
\algnewcommand\Output{\item[\textbf{Output:}]}
\newtheorem{definition}{Definition}
\newtheorem{theorem}{Theorem}
\DeclareMathOperator* {\argmin} {arg\,min}
\DeclareMathOperator* {\select} {select}
\newcommand{\R}{\mathbb{R}}
\newcommand{\N}{\mathbb{N}}
\newcommand{\G}{\mathbb{G}}
\newcommand{\CC}{K}
\newcommand{\Nb}{K}
\newcommand{\TT}{\langle T \rangle}
\newcommand{\lt}{\lambda}
\newcommand{\al}{\lt^*}
\newcommand{\ps}{\Pi}
\newcommand{\cdegr}{c_D}
\newcommand{\cclos}{c_C}
\newcommand{\cbetw}{c_B}
\newcommand{\vs}{v^\dagger}
\newcommand{\FA}{\widehat{A}}
\newcommand{\FR}{\widehat{R}}
\newcommand{\Add}{A^*}
\newcommand{\Rem}{R^*}
\newcommand{\thr}{\omega}
\title{Hiding in Temporal Networks}
\author[a]{Marcin Waniek}
\author[b]{Petter Holme}
\author[a]{Talal Rahwan}
\affil[a]{New York University Abu Dhabi, Abu Dhabi, UAE}
\affil[b]{Tokyo Institute of Technology, Tokyo, Japan}
\date{}
\begin{document}

\maketitle

\begin{abstract}
Social network analysis tools can infer various attributes just by scrutinizing one's connections. Several researchers have studied the problem faced by an evader whose goal is to strategically rewire their social connections in order to mislead such tools, thereby concealing their private attributes. However, to date this literature has only considered static networks, while neglecting the more general case of temporal networks, where the structure evolves over time. Driven by this observation, we study how the evader can conceal their importance from an adversary armed with temporal centrality measures. We consider computational and structural aspects of this problem: Is it computationally feasible to calculate optimal ways of hiding? If it is, what network characteristics facilitate hiding? This topic has been studied in static networks, but in this work, we add realism to the problem by considering temporal networks of edges changing in time. We find that it is usually computationally infeasible to find the optimal way of hiding. On the other hand, by manipulating one's contacts, one could add a surprising amount of privacy. Compared to static networks, temporal networks offer more strategies for this type of manipulation and are thus, to some extent, easier to hide in.
\end{abstract}

\section{Introduction}\label{sec:introduction}

The increasing sophistication and ubiquity of computer-based invigilation tools is a persistent threat to the  privacy of the general public.
The increasing number of privacy-related scandals, such as Cambridge Analytica using data of millions of Facebook users for political agendas~\cite{isaak2018user}, demonstrates just how vulnerable our private information is in the age of social media. Network data, from social media in particular, can be used to uncover sensitive information, such as sexual orientation~\cite{jernigan2009gaydar}, relationship status~\cite{gross2005information}, or political views~\cite{mislove2010you}.

Due to this vulnerability of network data, a number of privacy-preservation solutions have been proposed, both in terms of legislature~\cite{EU:2016:gdpr} and algorithmic solutions~\cite{Lane:et:al:2014,Kearns:et:al:2016}.
Most of the literature puts the responsibility of protecting the system users' privacy in the hands of a centralized authority~\cite{hay2007anonymizing,khatri2010designing,fung2010privacy}, but such an authority might be prone to error and negligence.
A different body of literature proposes methods that can be applied by members of the social network to protect their own privacy, without having to rely on any central entities~\cite{yu2018target,michalak2017}.

Nevertheless, so far, the development of privacy-protection schemes for network data has focused on networks that are static, i.e., whose structure do not change over time. At the same time, network science researchers are starting to shift their attention to temporal networks---the more general case where the structure is allowed to change~\cite{holme2012temporal}. In many domains that involve dynamic changes, momentary contacts, and processes unfolding over time, the added complexity of a temporal-network approach can be justified by an added predictive and explanatory power~\cite{holme2012temporal}. Temporal networks already found application in such varied areas as communication~\cite{eckmann2004entropy,candia2008uncovering}, microbiology~\cite{przytycka2010toward,rao2007inferring}, and face-to-face interactions~\cite{takaguchi2011predictability}.
Nevertheless, so far no privacy-protection solutions have been proposed for temporal networks.

Motivated by the relevance of the growing privacy threats and the increasing popularity of temporal networks, in this work we examine how a member of a temporal network can avoid being detected by temporal centrality measures. We set out to investigate the issue from both theoretical and empirical standpoints. As for the theoretical analysis, we evaluate the computational complexity of the problem faced by an evader who wishes to rewire the network in order to obscure their central position in it. We consider both the decision version of the problem---if it is possible to find an optimal solution in polynomial time---as well as the optimization version---if it is possible to identify a solution that is guaranteed to be within a certain bound from optimum. As for the empirical analysis, we consider several real-life temporal network datasets and investigate how effectively the top nodes in the temporal centrality rankings can conceal their importance. We not only observe the results of the hiding process but also identify, using regression analysis, the properties of the nodes that allow them to conceal their importance effectively. Altogether, our study is the first analysis of strategic evasion of social network analysis tools in temporal networks.

\section{Related Work}
\label{sec:related-work}

In recent years there has been a growing interest in temporal networks, where the network structure is not static, but instead changes with the passage of time~\cite{holme2012temporal}.
Temporal networks found particularly relevant applications in epidemics, where they have been used to predict the infection's reproduction number~\cite{holme2015basic} and to construct static graphs based on temporal contact data~\cite{holme2013epidemiologically}.
In the context of our work, an essential class of tools for the analysis of temporal networks are centrality measures~\cite{kim2012temporal}.
The approach to their design varies greatly, ranging from the analysis of network flows~\cite{tang2010analysing} and shortest temporal paths~\cite{pan2011path}, to the applications of eigenvector-like techniques~\cite{taylor2017eigenvector,lv2019eigenvector,taylor2019supracentrality}.

The topic of avoiding detection by social network analysis tools recently received some attention in the literature devoted to static networks.
The greatest amount of attention is focused on hiding from centrality measures, either by lowering the node's centrality, either in absolute terms~\cite{waniek2018hiding} or in relative terms~\cite{waniek2017construction,dey2019covert}.
Other works provide an axiomatic characterization of centrality measures that are resilient to being fooled~\cite{was2020manipulability}, analyze the possible strategies of an adversary who is aware of the existence of nodes that want to hide themselves~\cite{waniek2021strategic}, or consider the problem of hiding from centrality measures in multilayer networks~\cite{waniek2020hiding}. Other hiding problems considered in the literature include preventing the identification of closely-cooperating groups of nodes by community detection algorithms~\cite{waniek2018hiding}, avoiding the detection of private relationships by link prediction algorithms~\cite{waniek2019hide,zhou2019attacking}, and investigating the possibility of concealing the source of network diffusion from source detection algorithms~\cite{waniek2021social}. All of these works consider only static networks.

\section{Preliminaries}
\label{sec:preliminaries}

\subsection{Temporal Networks}

Throughout the article, we will let $\TT$ denote a time interval of $T$ discrete time steps, i.e., $\TT = \{0, \ldots, T-1\}$.
We will sometimes refer to a particular $t \in \TT$ as the \textit{moment} $t$.
Let us denote by $G = (V,\CC,T) \in \G$ a temporal network, where $V$ is the set of $n$ nodes, $\CC \subseteq V \times V \times \TT$ is the set of contacts, and $T$ is the duration of the time interval during which the contacts in $\CC$ take place.
We denote a \textit{contact} (sometimes also called a \textit{temporal edge}) between nodes $v$ and $w$ at time $t$ by $(v,w,t)$.
In this work we only consider \textit{undirected} temporal networks, i.e., we do not discern between contacts $(v,w,t)$ and $(w,v,t)$.
Moreover, we assume that networks do not contain self-contacts, i.e., $\forall_{v \in V} \forall_{t \in \TT}(v,v,t) \notin \CC$.
We denote all contacts of a given node $v$ by $\Nb_G(v)$.
Finally, for $\CC' \subseteq V \times V \times \TT$ we denote by $G \cup \CC'$ the effect of adding the set of contacts $\CC'$ to $G$, i.e., $G \cup \CC' = (V, \CC \cup \CC',T)$.

A \textit{time-respecting path} in a temporal network $G = (V,\CC,T)$ is an ordered sequence of distinct contacts from $\CC$, $\langle c_1, \ldots, c_k \rangle$, in which for every two consecutive contacts $(v,w,t)$ and $(v',w',t')$ we have that $w=v'$ and $t'>t$.
The \textit{duration} of the path is the time difference between the first and the last contact in the path, i.e., the duration of a path $\langle (v,w,t), \ldots, (v',w',t') \rangle$ is $t'-t$.
Let $\ps_G(v,w)$ denote the set of temporal paths from $v$ to $w$ with the minimal duration.
We say that we can \textit{reach} node $w$ from node $v$ at time $t$ if there exists a time-respecting path from $v$ to $w$ that starts at time greater than or equal to $t$.
The \textit{latency} between $v$ and $w$ at time $t$ is the shortest time it takes to reach $w$ from $v$ starting at time $t$ along time-respecting paths, we denote it by $\lt_G(v,w,t)$.
If no such time-respecting path exists then $\lt_G(v,w,t)=\infty$.
We denote \textit{average latency} between $v$ and $w$ by $\al(v,w)$.
Notice that average latency is the area under the latency plot, divided by the length of the time interval $T-1$.
Following Pan and Saram{\"a}ki~\cite{pan2011path} we assume a pair-specific temporal boundary condition, where for every pair of nodes, the first time-respecting path between them is repeated after the end of the observed time interval when computing the average latency for that pair.
Hence, if there exists at least one time-respecting path between the pair of nodes, then the average latency between them is finite.
An example of a latency plot used to compute the average latency is presented in Figure~\ref{fig:latency-example}.

\begin{figure}[t]
\centering
\includegraphics[width=.6\linewidth]{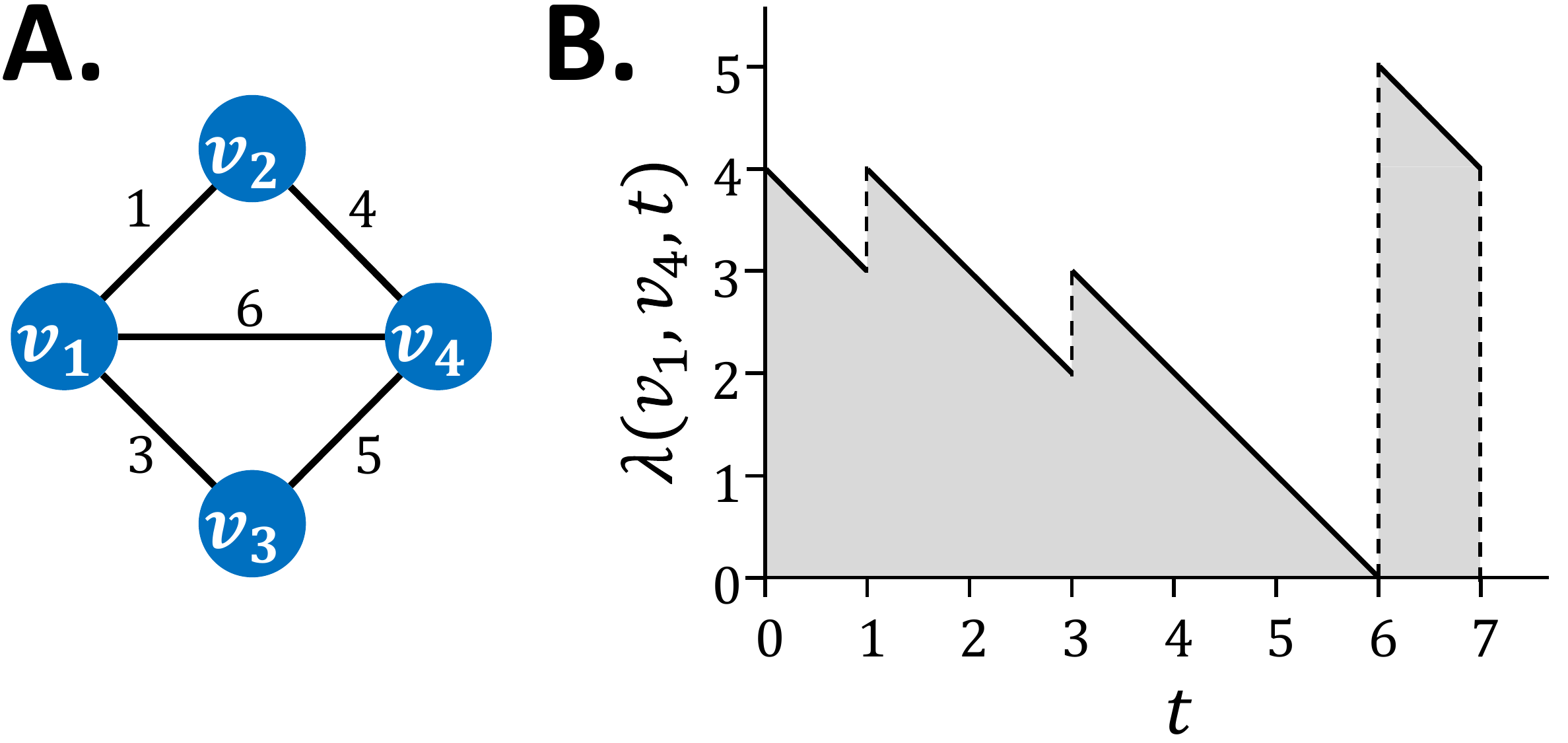}
\caption{
\textbf{An example of temporal network and a latency plot.}
\textbf{A.} A temporal network with times of each contact denoted as number next to edges.
\textbf{B.} A plot of latency between $v_1$ and $v_4$ for under assumption that the duration of time interval is $T=7$.
The gray area under plot is proportional to average latency between $v_1$ and $v_4$.
Notice that the latency in time interval $(6,7]$ is finite due to the assumption about a pair-specific temporal boundary condition~\cite{pan2011path}.
}
\label{fig:latency-example}
\end{figure}

To make the notation more readable, we will often omit the temporal network itself from the notation whenever it is clear from the context, e.g., by writing $\lt(v,w,t)$ instead of $\lt_G(v,w,t)$.
This applies not only to the notation presented thus far, but rather to all notation in this article.

\subsection{Temporal Centrality Measures}

Centrality measures quantify the importance of a given node in a network.
The concept has also been extended to temporal networks~\cite{kim2012temporal}.
In this work we consider the following temporal centrality measures:

\begin{itemize}

\item Degree temporal centrality~\cite{kim2012temporal}---importance of a node $v$ corresponds to its number of contacts.
\[
\cdegr\!\left(\left(V,\!\CC,\!T\right)\!,\!v\right) = \frac{\sum_{t \in \TT}|\{w\!\in\!V\!:\! (v,w,t)\!\in\!\CC\}|}{(n-1)T}.
\]

\item Closeness temporal centrality~\cite{pan2011path}---importance of a node $v$ corresponds to its average latency to other nodes.
\[
\cclos\!\left(\left(V,\!\CC,\!T\right)\!,\!v\right) = \frac{1}{n-1} \sum_{w \in V : v \neq w}\frac{1}{\al(v,w)}.
\]

\item Betweenneess temporal centrality~\cite{tang2010analysing}---importance of a node $v$ corresponds to the percentage of shortest temporal paths between pairs of other nodes passing through $v$.
\begin{align*}
\cbetw\!\left(\left(V,\!\CC,\!T\right)\!,\!v\right) =& \frac{1}{(n-1)(n-2)T} \\
& \sum_{\substack{u,w \in V \setminus \{v\}:\\ u \neq w}} \sum_{t=0}^{T-1} \frac{|\widetilde{\ps}^v_t(u,\!w)|}{|\ps(u,\!w)|}
\end{align*}
where $\widetilde{\ps}^v_t(u,w)$ is the set of shortest temporal paths from $u$ to $w$ such that $v$ belongs to the path and either contacts between $v$ and its predecessor and successor take place at moment $t$, or contact between $v$ and its predecessor takes place at or before moment $t$, while the contact between $v$ and its successor takes place after moment $t$:
\begin{align*}
\widetilde{\ps}^v_t(u,\!w) =& \{\pi\!\in\!\ps(u,\!w) : (v'\!\!,\!v,\!t')\!\in\!\pi \land (v,\!v''\!\!,\!t'')\!\in\!\pi \\
& \land (t' = t = t'' \lor t' \leq t < t'')\}.
\end{align*}

\item Eigenvector temporal centrality~\cite{lv2019eigenvector}---importance of a node $v$ corresponds to the importance of its neighbors. The temporal version of the eigenvector centrality was defined in a number of ways~\cite{taylor2017eigenvector,huang2017centrality,yin2018inter,taylor2019tunable}.
In our work we use algorithm by Lv et al.~\cite{lv2019eigenvector}, as it allows to efficiently process even relatively large networks.

\end{itemize}

\section{Theoretical Analysis}
\label{sec:theoretical-analysis}

We now present the formal definition of the computational problems faced by the evader, starting with finding an optimal way of hiding within a certain budget.

\begin{definition}[Temporal Hiding]
An instance of the Temporal Hiding problem is defined by a tuple~$(G,\vs,\FA,\FR,b,c,\thr)$, where $G= (V,\CC,T)$ is a temporal network, $\vs$ is the evader, $\FA \subseteq V \times V \times \TT$ is the set of contacts allowed to be added, $\FR \subseteq \CC$ is the set of contacts allowed to be removed, $b \in \N$ is a budget specifying the maximum number of contacts that can be added or removed, $c \colon \G \times V \rightarrow \R$ is a temporal centrality measure, and $\thr \in \N$ is a chosen safety threshold.
The goal is then to identify two sets, $\Add \subseteq \FA$ and $\Rem \subseteq \FR$, such that $|\Add|+|\Rem| \leq b$ and:
\[
\left|\left\{ w \in V : c(G^*,w) > c(G^*,\vs) \right\}\right| \geq \thr
\]
where $G^* = \left(V, (\CC \cup \Add) \setminus \Rem, T\right)$.
\end{definition}

We also present an approximation version of the problem, where the goal of the evader is to satisfy a certain safety threshold using as few network modifications as possible.

\begin{definition}[Minimum Temporal Hiding]
An instance of the Minimum Temporal Hiding problem is defined by a tuple $(G,\vs,\FA,\FR,c,\thr)$, where $G= (V,\CC,T)$ is a temporal network, $\vs$ is the evader, $\FA \subseteq V \times V \times \TT$ is the set of contacts allowed to be added, $\FR \subseteq \CC$ is the set of contacts allowed to be removed, $c \colon \G \times V \rightarrow \R$ is a temporal centrality measure, and $\thr \in \N$ is a chosen safety threshold.
The goal is then to identify two sets, $\Add \subseteq \FA$ and $\Rem \subseteq \FR$, such that the sum of their sizes $|\Add|+|\Rem|$ is as small as possible and and:
\[
\left|\left\{ w \in V : c(G^*,w) > c(G^*,\vs) \right\}\right| \geq \thr
\]
where $G^* = \left(V, (\CC \cup \Add) \setminus \Rem, T\right)$.
\end{definition}

\begin{table*}[t!]
\caption{Summary of our computational complexity results.}
\label{tab:decision-summary}
\centering
\small
\begin{tabular}{lcc}
\hline
Centrality & Temporal Hiding & Minimum Temporal Hiding \\
\hline
Degree & NP-complete (Theorem~\ref{thrm:npc-degree}) & We show a $2$-approximation algorithm (Theorem~\ref{thrm:appr-degree}) \\
Closeness & NP-complete (Theorem~\ref{thrm:npc-closeness}) & Inapproximable within $(1-\epsilon) \ln |\FA|$ for any $\epsilon > 0$ (Theorem~\ref{thrm:appr-closeness}) \\
Betweenness & NP-complete (Theorem~\ref{thrm:npc-betweenness}) & Inapproximable within $(1-\epsilon) \ln |\FA|$ for any $\epsilon > 0$ (Theorem~\ref{thrm:appr-betweenness}) \\
\hline
\end{tabular}
\end{table*}

Table~\ref{tab:decision-summary} presents the summary of our computational complexity results.

\begin{theorem}
\label{thrm:npc-degree}
Temporal Hiding problem is NP-complete given the degree temporal centrality.
\end{theorem}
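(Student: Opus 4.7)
My plan has two parts: first, show the problem lies in NP; second, prove NP-hardness by a polynomial-time reduction from a known NP-hard problem.

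\textbf{Membership in NP.} A certificate is the modification pair $(\Add, \Rem)$. The verifier checks that $\Add \subseteq \FA$, $\Rem \subseteq \FR$, and $|\Add|+|\Rem| \leq b$; constructs $G^* = (V, (\CC \cup \Add) \setminus \Rem, T)$; computes $\cdegr(G^*, v)$ for every $v \in V$, which amounts to counting contacts incident to $v$ in $G^*$; and tallies the nodes whose value strictly exceeds $\cdegr(G^*, \vs)$. All steps run in time polynomial in $|V|+|\CC|+|\FA|+|\FR|+T$, so the problem lies in NP.

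\textbf{NP-hardness.} For hardness I would reduce from Set Cover: given a universe $U = \{u_1,\ldots,u_n\}$, a collection $\mathcal S = \{S_1,\ldots,S_m\}$, and an integer $k$, decide whether some $k$ sets in $\mathcal S$ cover $U$. The plan is to build a temporal network containing the evader $\vs$; one ``threat'' node $w_i$ per element $u_i$, calibrated via auxiliary padding contacts so that each $w_i$'s initial contact count is exactly $\cdegr(G, \vs) - 1$; and, for each set $S_j$, a ``set gadget'' consisting of a distinguished contact $(\vs, z_j, t_j)$ at a fresh time $t_j$ together with padding contacts binding $z_j$ to the threats of elements not in $S_j$. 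Setting $\FR$ to the distinguished contacts only, $\FA = \emptyset$, $b = k$, and $\thr = n$, the intended semantics is that removing $(\vs, z_j, t_j)$ ``selects'' $S_j$: it drops $\cdegr(G^*, \vs)$ by one while, via the gadget's padding, the counts of threats $w_i$ with $u_i \in S_j$ remain ahead of $\vs$, and threats $w_i$ with $u_i \notin S_j$ are held at or below $\vs$. Under this correspondence, a hiding solution of size $\leq k$ exists iff a set cover of size $\leq k$ exists.

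\textbf{Main obstacle.} Because each single contact modification touches only two nodes' counts, directly encoding ``one set covers many elements'' is nontrivial: a lone removal that lowers $\vs$ would otherwise indiscriminately push \emph{all} threats above $\vs$, trivializing the problem. The gadget therefore has to simultaneously lower $\vs$ \emph{and} suppress the threats corresponding to uncovered elements, which requires carefully designed auxiliary padding contacts whose own counts rebalance after each removal. The subtlety is two-fold: (i) calibrating all initial contact counts and padding so that the net effect of $r$ removals matches exactly ``having covered some union of $r$ sets''; and (ii) ruling out ``side-channel'' hiding strategies in which combinations of removals meet the threshold without corresponding to a genuine set cover. Once the gadget achieves this tight one-to-one correspondence, both directions of the equivalence follow by a direct counting argument: a cover of size $\leq k$ yields a hiding by invoking the corresponding gadgets, and conversely any hiding of size $\leq k$ reads off a valid cover of the same size. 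If the Set Cover gadgetry proves awkward because of the two-node locality of modifications, a fallback is to reduce instead from a problem whose bipartite combinatorial structure matches this locality more directly, such as Vertex Cover or Hitting Set restricted to sets of bounded size.
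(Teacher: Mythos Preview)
Your NP-membership argument is fine. The hardness reduction, however, has a real gap that you yourself flag but do not close.

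In your construction the only removable contacts are the $(\vs, z_j, t_j)$. Removing any such contact changes the degree count of exactly two nodes---$\vs$ and $z_j$---each by one; no $w_i$ is ever touched. Hence after removing any $r \geq 2$ of these contacts, every threat $w_i$ still has its initial count (which you set to one below $\vs$'s original), while $\vs$ has dropped by $r$, so \emph{every} $w_i$ now sits strictly above $\vs$. The threshold $\thr = n$ is met regardless of \emph{which} sets were chosen, and the Set Cover structure evaporates. You correctly diagnose this (``indiscriminately push all threats above $\vs$'') but then merely assert that the gadget must ``simultaneously lower $\vs$ and suppress the threats corresponding to uncovered elements'' via padding. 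No static padding can do this: padding fixes only initial counts, while the removal dynamics touch only $\vs$ and the selected $z_j$, never any $w_i$. With $\FR$ as you define it, which $w_i$ end up above $\vs$ depends solely on $r$, not on the index set $\{j_1,\ldots,j_r\}$. The obstacle you name is not a detail to be filled in later; it is fatal to this reduction.

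The underlying mismatch is that degree centrality is $2$-local: one contact modification changes exactly two counts by exactly one. This pairs badly with Set Cover, where selecting one set must affect arbitrarily many elements. It pairs perfectly, however, with $k$-\textsc{Clique}, which is what the paper reduces from. There one takes $T=1$, pads each $v_i \in V(H)$ to count $n-k+2$ while $\vs$ has count $n$, sets $\FA$ to the edges of $H$ at time $0$, $\FR=\emptyset$, $b=\binom{k}{2}$, $\thr=k$. Pushing $k$ nodes from $n-k+2$ above $n$ needs $k(k-1)$ endpoint increments, and $b$ added contacts supply exactly $2b=k(k-1)$ increments, so every added contact must join two of the $k$ chosen nodes---forcing a clique in $H$. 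Your instinct that the right source problem should ``match the locality'' is sound; the problem that matches is $k$-\textsc{Clique}, not a covering problem.
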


\begin{proof}
The problem is trivially in NP, since after adding and removing a given set of contacts we can computed the ranking of all nodes according to the degree temporal centrality in polynomial time.

We will now prove the NP-hardness of the problem.
To this end, we will show a reduction from the NP-complete \textit{Finding $k$-Clique} problem.
The decision version of this problem is defined by a simple network, $H=(V,E)$, where $V=\{v_1,\ldots,v_n\}$, and a constant $k \in \N$, where the goal is to determine whether there exist $k$ nodes forming a clique in $H$.

Let $(H,k)$ be a given instance of the Finding $k$-Clique problem.
Let us assume that $n=2$, i.e., network $H$ has at least two nodes.

We will now construct an instance of the Temporal Hiding problem. First, let us construct a temporal network $G=(V',\CC,T)$ where:
\begin{itemize}
\item $V' = V \cup \{ \vs \} \cup \bigcup_{i=1}^{n} \bigcup_{j=1}^{n-k+1} \{u^i_j\}$,
\item $\CC\!=\!\bigcup_{v_i \in V} \bigcup_{u^i_j \in V} \{(v_i,\!u^i_j,\!0)\} \cup \bigcup_{v_i \in V} \{(\vs\!,\!v_i,\!0)\}$,
\item $T=1$.
\end{itemize}
An example of the construction of the network $G$ is presented in Figure~\ref{fig:npc-degree}.

Now, consider the instance $(G,\vs,\FA,\FR,b,c,\thr)$ of the Temporal Hiding problem, where:
\begin{itemize}
\item $G$ is the temporal network we just constructed,
\item $\vs$ is the evader,
\item $\FA = \{(v_i,v_j,0) : (v_i,v_j) \in E\}$, i.e., only edges existing in $H$ can be added to $G$, all of them in moment $0$,
\item $\FR = \emptyset$, i.e., none of the edges can be removed,
\item $b=\frac{k(k-1)}{2}$,
\item $c$ is the temporal degree centrality,
\item $\thr=k$ is the safety threshold.
\end{itemize}

Since $\FR = \emptyset$, for any solution to the constructed instance of the Temporal Hiding problem we must have $\Rem = \emptyset$.
Hence, we will omit mentions of $\Rem$ in the remainder of the proof, and we will assume that a solution consists just of $\Add$.

Note that the number of contacts of the evader $\vs$ in $G$ is $n$, and it does not change after the addition of any $A \subseteq \Add$ (as we can only add edges between the members of $V$).
Furthermore, note that the number of contacts of every node $u^i_j$ is $1$, and it cannot be increased.
Therefore, the only nodes that can contribute to satisfying the safety threshold (by increasing their centrality to a value greater than that of the evader) are the nodes in $V$.
The number of contacts of any $v_i$ in $G$ is $n-k+2$ (as it contacts with $n-k+1$ nodes $u^i_j$ and with the node $\vs$).
Therefore, for a given $v_i$ to have a greater number of contacts than $\vs$, we have to add to $G$ at least $k-1$ contacts incident with $v_i$.

\begin{figure}[t]
\centering
\includegraphics[width=.5\linewidth]{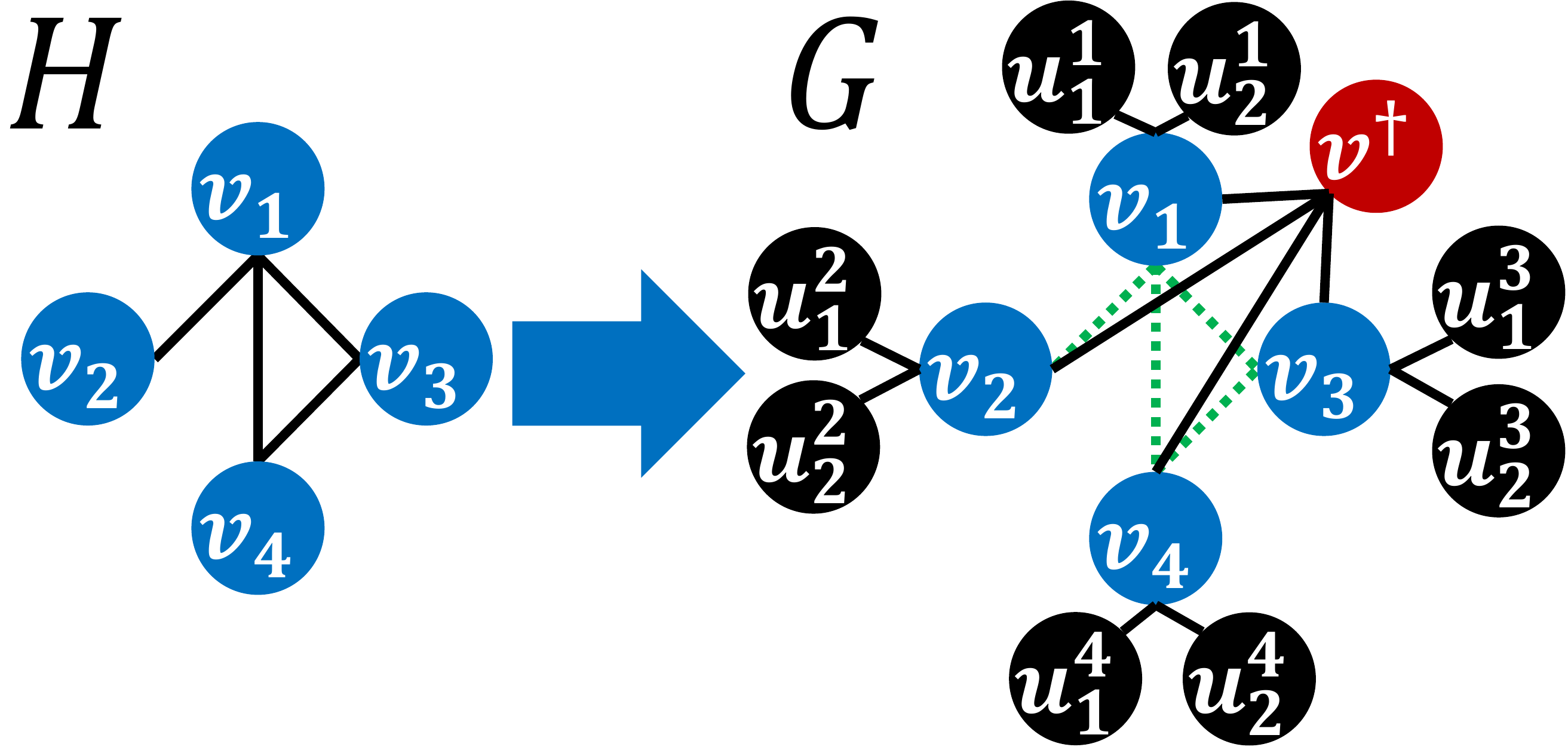}
\caption{
Construction used in the proof of Theorem~\ref{thrm:npc-degree}.
The network $H$ is given as part of the Finding $k$-Clique problem instance, while the network $G$ is constructed as part of the Temporal Hiding problem instance.
The blue nodes in $G$ are those corresponding to the nodes in $H$, while the evader is marked red.
All contacts occur at moment $0$.
Green dotted lines depict the contacts allowed to be added.
}
\label{fig:npc-degree}
\end{figure}

We will now show that the constructed instance of the Temporal Hiding problem has a solution if and only if the given instance of the Finding $k$-Clique problem has a solution.

Assume that there exists a solution to the given instance of the Finding $k$-Clique problem, i.e., a subset $V^* \subseteq V$ forming a $k$-clique in $H$.
We will show that $\Add = \{(v_i,v_j,0): v_i,v_j \in V^*\}$ is a solution to the constructed instance of the Temporal Hiding problem.
First, notice that indeed $\Add \subseteq \FA$, as $\FA$ contains all edges from $H$ at moment $0$, and $V^* \times V^*$ is a clique in $H$.
Note also that adding $\Add$ to $G$ increases the number of contacts of $k$ nodes in $V^*$ to $n+1$, i.e., to a value greater than the number of contacts of the evader.
We showed that if there exists a solution to the given instance of the Finding $k$-Clique problem, then there also exists a solution to the constructed instance of the Temporal Hiding problem.

Assume that there exists a solution $\Add$ to the constructed instance of the Temporal Hiding problem.
We will show that $V^* = \bigcup_{(v_i,v_j,0) \in \Add} \{v_i,v_j\}$ forms a $k$-clique in $H$.
Notice that since $\Add$ is a solution, it increases the number of contacts of at least $k$ nodes in $V$ (since the safety threshold is $\thr=k$) by at least $k-1$.
However, since the budget is $b=\frac{k(k-1)}{2}$, adding $\Add$ must increase the number of contacts of \textit{exactly} $k$ nodes in $V$ by \textit{exactly} $k-1$.
If such a choice is available, the nodes in $V^*$ form a clique in $\FA$ at moment $0$, therefore, they also form a clique in $H$.
We showed that if there exists a solution to the constructed instance of the Temporal Hiding problem, then there also exists a solution to the given instance of the Finding $k$-Clique problem.
This concludes the proof.
\end{proof}

\begin{theorem}
\label{thrm:npc-closeness}
Temporal Hiding problem is NP-complete given the closeness temporal centrality.
\end{theorem}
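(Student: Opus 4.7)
Membership of Temporal Hiding under closeness in NP is immediate: given candidate sets $\Add\subseteq\FA$ and $\Rem\subseteq\FR$, the value $\cclos(G^*,w)$ for every $w\in V$ can be computed in polynomial time by standard shortest-temporal-path algorithms, so the ranking condition is verifiable efficiently. For NP-hardness my plan is to again reduce from Finding $k$-Clique, mirroring the structure of Theorem~\ref{thrm:npc-degree} but enriching the temporal gadget so that closeness, rather than raw degree, is the quantity driven by the chosen contacts.

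Given an instance $(H,k)$ with $H=(V,E)$ and $V=\{v_1,\ldots,v_n\}$, I would build a temporal network $G$ on the evader $\vs$, a copy of $V$, and, for each $v_i$, a private pendant set $\{u^i_1,\ldots,u^i_\ell\}$ of carefully chosen size. The base contacts would place $(\vs,v_i,t_1)$ for every $i$ at an early moment $t_1$, so that $\cclos(G,\vs)$ is a fixed computable value, and $(v_i,u^i_j,t_2)$ at some later moment $t_2>t_1$, so that each $v_i$ initially reaches only its own leaves. The allowed additions would be $\FA=\{(v_i,v_j,t_0):\{v_i,v_j\}\in E\}$ at a single moment $t_0$, chosen so that a contact $(v_i,v_j,t_0)$ creates short time-respecting paths from $v_i$ to every $u^j_\cdot$ and symmetrically from $v_j$ to every $u^i_\cdot$, while the evader's contacts at $t_1$ cannot be extended through these additions. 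With $\FR=\emptyset$, $\thr=k$, and $b=k(k-1)/2$, the gadget is tuned so that a single added contact gives an identical, fixed boost to $\cclos(\cdot,v_i)$ and $\cclos(\cdot,v_j)$ while raising no other node above $\vs$, and so that a given $v_i$ surpasses $\vs$ precisely when it acquires at least $k-1$ of the added contacts. The counting argument from Theorem~\ref{thrm:npc-degree} then forces the $k(k-1)/2$ chosen contacts to connect exactly $k$ nodes pairwise, i.e., to realise a $k$-clique in $H$, giving the desired biconditional.

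The main obstacle is precisely this calibration, because closeness is non-local: a single new contact can in principle shorten latencies seen by the evader, by the leaves, or by members of $V$ not involved in the addition. I would control this by three design choices: (a) the ordering $t_1<t_0<t_2$, which prevents the evader's early contacts from being prolonged through the additions and thus keeps $\cclos(G^*,\vs)$ constant across all admissible modifications; (b) choosing $\ell$ large enough that the $u^i_j$ pendants dominate each $v_i$'s closeness sum, so each added contact contributes a clean, equal increment to $1/\al(v_i,v_j)$-type terms and to terms involving the other endpoint's leaves; and (c) keeping the network outside the pendants sparse, so that no leaf or uninvolved $v_m$ can overtake $\vs$. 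Once this ``one contact equals one fixed boost to exactly its two endpoints'' behaviour is secured by the gadget, the remaining combinatorial argument matching the budget $k(k-1)/2$ to a $k$-clique follows the pattern of the degree proof essentially verbatim.
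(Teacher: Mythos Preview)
Your proposal takes a different route from the paper, which reduces from Set Cover with $\thr=1$: there a single distinguished node $z$ is the only candidate to overtake $\vs$, and it does so exactly when the chosen contacts give $z$ a time-respecting path to every universe element. That design avoids the ``equal boost per contact'' calibration altogether.

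Your Clique-based construction, as written, has a concrete obstruction. With $(\vs,v_i,t_1)$ for every $i$ and $(v_i,u^i_j,t_2)$ with $t_1<t_2$, the path $\langle(\vs,v_i,t_1),(v_i,u^i_j,t_2)\rangle$ is time-respecting, so the evader reaches \emph{every} pendant in the network; $\cclos(G^*,\vs)$ therefore collects a nonzero term for each of the $n\ell$ pendants and each of the $n$ nodes $v_i$. A node $v_i$ equipped with $k-1$ added contacts, by contrast, reaches only $\vs$, the $k-1$ chosen $v_j$, and $k\ell$ pendants: paths onward through $\vs$ are blocked because both hops would occur at $t_1$, and paths onward through an added neighbour to a third $v_m$ are blocked because both hops would occur at $t_0$. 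Hence the reachable set of $v_i$ has size $\Theta(k\ell)$ against $\Theta(n\ell)$ for $\vs$, and since the average latencies on the corresponding pairs differ only by bounded factors (all two-hop paths to pendants terminate at the same moment $t_2$), no choice of $\ell$ lets $v_i$ overtake $\vs$ when $k\ll n$. Incidentally, your justification~(a) is misstated: with $t_1<t_0$ the evader's contacts \emph{can} be extended through the additions; $\cclos(G^*,\vs)$ is constant only because $\vs$ already reaches every node optimally. A different ordering such as $t_0<t_2<t_1$, which genuinely severs $\vs$ from the pendants, might salvage the approach, but as specified the calibration step cannot succeed.
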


\begin{proof}
The problem is trivially in NP since after adding and removing a given set of contacts, we can compute the ranking of all nodes according to the closeness temporal centrality in polynomial time.

We will now prove that the problem is NP-hard.
To this end, we will show a reduction from the NP-complete \textit{Set Cover} problem.
The decision version of this problem is defined by a universe, $U=\{u_1,\ldots,u_{|U|}\}$, a collection of sets $S=\{S_1,\ldots,S_{|S|}\}$ such that $\forall_i S_i \subset U$, and an integer $k \in \N$ where the goal is to determine whether there exist $k$ elements of $S$ the union of which equals $U$.

Let $(U,S,k)$ be a given instance of the Set Cover problem.
Let us assume that $|S| \geq 3$, note that all instances where $|S| < 3$ can be solved in polynomial time.
We will now construct an instance of the Temporal Hiding problem.
First, let us construct a temporal network $G=(V,\CC,T)$ where:
\begin{itemize}
\item $V = \{ \vs, z \} \cup U \cup S \cup \bigcup_{i=1}^{|S|-k} \{y_i\} \cup \bigcup_{i=1}^{|U|+|S|-1} \{x_i\}$,
\item $\CC = \bigcup_{x_i \in V}\{(\vs,x_i,0)\} \cup \bigcup_{y_i \in V}\{(z,y_i,0)\} \cup \bigcup_{u_i \in S_j}\{(u_i,S_j,1)\}$,
\item $T=2$.
\end{itemize}
An example of the construction of the network $G$ is presented in Figure~\ref{fig:npc-closeness}.

\begin{figure}[t]
\centering
\includegraphics[width=.6\linewidth]{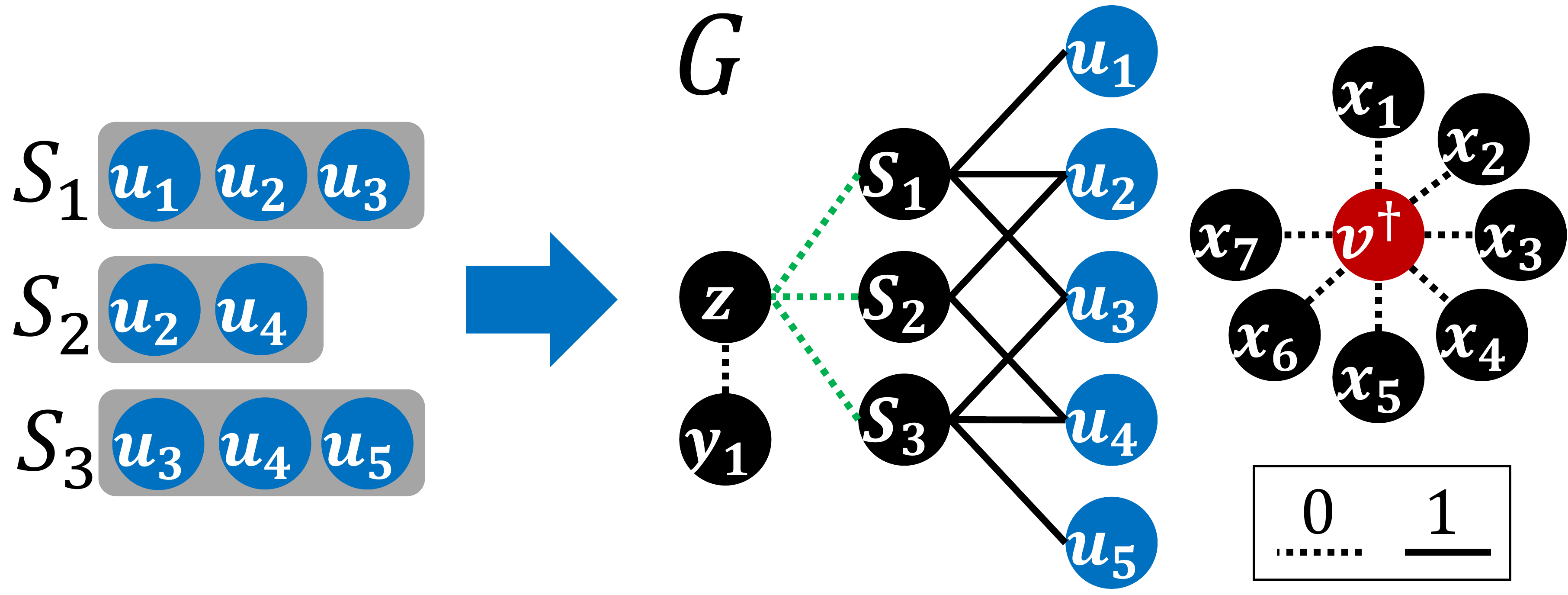}
\caption{
Construction used in the proof of Theorem~\ref{thrm:npc-closeness}.
On the left there is an instance of the Set Cover problem, while on the right there is a network $G$ constructed as part of the Temporal Hiding problem instance.
The blue nodes in $G$ correspond to the elements of the universe $U$, while the evader is marked red.
Dotted lines correspond to contacts at moment $0$, and solid lines to contact at moment $1$.
Green lines depict the contacts allowed to be added.
}
\label{fig:npc-closeness}
\end{figure}

Now, consider the instance $(G,\vs,\FA,\FR,b,c,\thr)$ of the Temporal Hiding problem, where:
\begin{itemize}
\item $G$ is the temporal network we just constructed,
\item $\vs$ is the evader,
\item $\FA = \{(z,S_i,0) : S_i \in V\}$,
\item $\FR = \emptyset$, i.e., none of the edges can be removed,
\item $b=k$,
\item $c$ is the temporal closeness centrality,
\item $\thr=1$ is the safety threshold.
\end{itemize}

Since $\FR = \emptyset$, for any solution to the constructed instance of the Temporal Hiding problem we must have $\Rem = \emptyset$.
Hence, we will omit mentions of $\Rem$ in the remainder of the proof, and we will assume that a solution consists just of $\Add$.

First, let us analyze the temporal closeness centrality values of the nodes in $G$ after the addition of any $A \subseteq \FA$.
Let $q_A$ denote the number of nodes $u_j \in U$ such that $\al(z,u_j)<\infty$, i.e., the number of nodes $u_j \in U$ such that $z$ is connected (via the contacts from $A$) with at least one node $S_i$ connected with $u_j$ (notice that there are no other possibilities for $z$ to have finite average latency to a node in $U$).
The temporal closeness centrality values can be computed as follows:
\begin{itemize}
\item $\cclos(G \cup A, \vs) = \frac{2(|U|+|S|-1)}{n-1}$,
\item $\cclos(G \cup A, z) = \frac{2(|S|-k+|A|+q_A)}{n-1}$,
\item $\cclos(G \cup A, S_i) \leq \frac{2(|U|+1)}{n-1} < \cclos(G \cup A, \vs)$ for every $S_i \in V$ such that $(z,S_i,0) \in A$,
\item $\cclos(G \cup A, S_i) \leq \frac{2|U|}{n-1} < \cclos(G \cup A, \vs)$ for every $S_i \in V$ such that $(z,S_i,0) \notin A$,
\item $\cclos(G \cup A, u_i) = \frac{2(|\{S_j \in S: u_i \in S_j\}|)}{n-1} \leq \frac{2|S|}{n-1} < \cclos(G \cup A, \vs)$ for every $u_i \in V$,
\item $\cclos(G \cup A, x_i) = \frac{2}{n-1} < \cclos(G \cup A, \vs)$ for every $x_i \in V$,
\item $\cclos(G \cup A, y_i) = \frac{2}{n-1} < \cclos(G \cup A, \vs)$ for every $y_i \in V$.
\end{itemize}

Since the safety threshold is $\thr=1$, only one node has to have greater temporal closeness centrality than $\vs$ after he addition of a given $A \subseteq \FA$ in order for said $A$ to be solution to the constructed instance of the Temporal Hiding problem.
However, notice that the only node that can have greater temporal closeness centrality than $\vs$ (and therefore higher position in the ranking) is $z$.
In other words, a given $A \subseteq \FA$ is a solution to the constructed instance of the Temporal Hiding problem if and only if we have $\cclos(G \cup A, z) > \cclos(G \cup A, \vs)$.

We will now show that the constructed instance of the Temporal Hiding problem has a solution if and only if the given instance of the Set Cover problem has a solution.

Assume that there exists a solution to the given instance of the Set Cover problem, i.e., a subset $S^* \subseteq S$ of size $k$ the union of which is the universe $U$.
We will show that $\Add = \{z\} \times S^* \times \{0\}$ (i.e., connecting $z$ with nodes corresponding to all sets in $S^*$) is a solution to the constructed instance of the Temporal Hiding problem.
First, notice that after the addition of $\Add$ there exists a time-respecting path from $z$ to every node $u_j \in U$, leading through the node corresponding to an element $S_i \in S^*$ containing $u_j$, with which $z$ is now connected.
Hence, we have that $q_{\Add} = |U|$ and $|\Add|=k$, which gives us:
\begin{align*}
\cclos(G \cup \Add, z) = \frac{2(|S|+|U|)}{n-1} >& \frac{2(|U|+|S|-1)}{n-1} \\
& = \cclos(G \cup \Add, \vs).
\end{align*}
Therefore, after the addition of $\Add$ node $z$ has greater temporal closeness centrality than the evader.
We showed that if there exists a solution to the given instance of the Set Cover problem, then there also exists a solution to the constructed instance of the Temporal Hiding problem.

Assume that there exists a solution $\Add$ to the constructed instance of the Temporal Hiding problem.
We will show that $S^* = \{S_i \in S : (z,S_i,0) \in \Add\}$ covers the universe $U$.
Let us compute the difference between $\cclos(G \cup \Add, \vs)$ and $\cclos(G \cup \Add, z)$:
\begin{align*}
\cclos(G \cup \Add, z)-\cclos(G&\cup \Add, \vs) \\
&= \frac{2(|\Add|-k+q_{\Add}-|U|+1)}{n-1}.
\end{align*}
Since $\Add$ is a solution, $z$ must have greater temporal closeness centrality than $\vs$, implying $\cclos(G \cup \Add, z)-\cclos(G \cup \Add, \vs) > 0$, which gives us:
\[
|\Add|+q_{\Add}+1 > |U|+k.
\]
Notice however that $|\Add| \leq k$ (since the evader's budget is $b = k$) and that $q_{\Add} \leq |U|$ (since $q_{\Add}$ is the number of nodes in $U$ at finite average latency from $z$).
Therefore, we must have $|\Add|=k$ and $q_{\Add}=|U|$, which means that after the addition of $\Add$ for every $u_j \in U$ we have a node $S_i \in S$ connected to both $z$ and $u_j$ (there is no other way for $z$ to have a finite average latency to $u_j$).
Consequently, every element of the universe $u_j \in U$ is covered by at least one set $S_i \in S^*$, as $z$ is connected with a node $S_i$ only if it belongs to $S^*$, and $S_i$ is connected with $u_j$ only if it contains $u_j$ in the Set Cover problem instance.
We showed that if there exists a solution to the constructed instance of the Temporal Hiding problem, then there also exists a solution to the given instance of the Set Cover problem.
This concludes the proof.
\end{proof}

\begin{theorem}
\label{thrm:npc-betweenness}
Temporal Hiding problem is NP-complete given the betweenness temporal centrality.
\end{theorem}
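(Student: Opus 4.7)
The plan is to mirror the two preceding proofs: membership in NP is immediate, since once a candidate pair $(\Add, \Rem)$ is fixed the temporal betweenness centrality of every node can be recomputed in polynomial time and compared against the evader's. The bulk of the work is therefore the NP-hardness reduction, and the natural source problem is again Set Cover $(U,S,k)$, because betweenness, like closeness, is sensitive to which pairs of nodes are linked by time-respecting paths through a distinguished intermediate vertex.

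I would construct a temporal network with a distinguished non-evader vertex $z$ and lay out the contacts over three consecutive moments so that the shortest time-respecting paths between certain ``source'' nodes $s_1,\ldots,s_L$ and the universe nodes $u_1,\ldots,u_{|U|}$ are forced through $z$ exactly when $z$ has been connected to a set-node covering the corresponding element. Concretely, at moment $0$ every source $s_l$ is attached to $z$; the allowed additions $\FA = \{(z,S_i,1) : S_i \in S\}$ place $z$ one hop from each chosen set-node at moment $1$; and at moment $2$ a fixed bipartite layer connects each set-node $S_i$ to each $u_j \in S_i$. After adding any $\Add \subseteq \FA$ of size at most $k=b$, the unique minimum-duration temporal path from $s_l$ to $u_j$ has the form $s_l, z, S_i, u_j$ for some $S_i \in \Add$ with $u_j \in S_i$, so the number of source--universe pairs on which $z$ accumulates betweenness is exactly $L$ times the number of elements of $U$ covered by $\Add$.

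In parallel, the evader $\vs$ is attached only to a collection of degree-one padding nodes at carefully chosen moments, so that $\vs$ sits on a fixed, precomputable number of shortest temporal paths that does not change when any edges of the form $(z,S_i,1)$ are added. By calibrating $L$ and the sizes of the padding blocks, one can force $\cbetw(G \cup \Add, z) > \cbetw(G \cup \Add, \vs)$ to hold precisely when $\Add$ covers every $u_j \in U$. Setting the safety threshold to $\thr = 1$ then establishes the equivalence between solutions to the Temporal Hiding instance and set covers of size at most $k$.

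The main obstacle I anticipate is controlling the multiplicity and uniqueness of shortest temporal paths in the gadget: one must verify that adding $(z,S_i,1)$ never creates an alternative shortest path bypassing $z$, that no shortest path begins to traverse $\vs$, and that the pair-specific temporal boundary condition of Pan and Saram\"aki does not distort the betweenness bookkeeping. Handling the normalization factor $(n-1)(n-2)T$ and checking that pairs drawn from $U \times U$, from the padding blocks, or mixing the two cannot inflate $\cbetw(G \cup \Add, \vs)$ will require a careful but routine case analysis, analogous to the $q_{\Add}$ counting used in the proof of Theorem~\ref{thrm:npc-closeness}.
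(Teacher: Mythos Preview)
Your proposal is correct and follows essentially the same route as the paper: reduce from Set Cover, use three time moments with sources attached to $z$ at time~$0$, allowed additions $\FA=\{(z,S_i,1)\}$, and set-to-element contacts at time~$2$, so that $z$'s betweenness counts exactly the covered elements; fix the evader's betweenness via padding leaves, set $\FR=\emptyset$, $b=k$, $\thr=1$. The paper makes two simplifying choices you do not: it uses a \emph{single} source node $w$ rather than $L$ of them (so $z$'s raw count is $|A|+q_A$ with no scaling factor), and it reduces from \emph{$3$-Set Cover} so that each $S_i$ can lie on at most six shortest paths, which immediately keeps $\cbetw(S_i)$ below $\cbetw(\vs)$. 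In your version with $L$ sources and unbounded sets, each $S_i$ accrues betweenness on the order of $L\cdot|S_i|$, so this is an additional case you must bound when you do the ``careful but routine case analysis''; your closing paragraph lists only the checks protecting $\vs$'s value, not the one keeping every $S_i$ below it.
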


\begin{proof}
The problem is trivially in NP, since after adding and removing a given set of contacts we can computed the ranking of all nodes according to the betweenness temporal centrality in polynomial time.

We will now prove that the problem is NP-hard.
To this end, we will show a reduction from the NP-complete \textit{$3$-Set Cover} problem.
The decision version of this problem is defined by a universe, $U=\{u_1,\ldots,u_{|U|}\}$, a collection of sets $S=\{S_1,\ldots,S_{|S|}\}$ such that $\forall_i S_i \subset U \land |S_i|=3$, and an integer $k \in \N$ where the goal is to determine whether there exist $k$ elements of $S$ the union of which equals $U$.

Let $(U,S,k)$ be a given instance of the $3$-Set Cover problem.
Let us assume that $|U| \geq 6$, note that all instances where $|U| < 6$ can be solved in polynomial time.
We will now construct an instance of the Temporal Hiding problem.
First, let us construct a temporal network $G=(V,\CC,T)$ where:
\begin{itemize}
\item $V = \{ \vs, w, y, z \} \cup U \cup S \cup \bigcup_{i=1}^{|U|+k-1} \{x_i\}$,
\item $\CC = \{(z,w,0),(\vs,y,0)\} \cup \bigcup_{x_i \in V}\{(\vs,x_i,1)\} \cup \bigcup_{u_i \in S_j}\{(u_i,S_j,2)\}$,
\item $T=3$.
\end{itemize}
An example of the construction of the network $G$ is presented in Figure~\ref{fig:npc-betweenness}.

\begin{figure}[t]
\centering
\includegraphics[width=.6\linewidth]{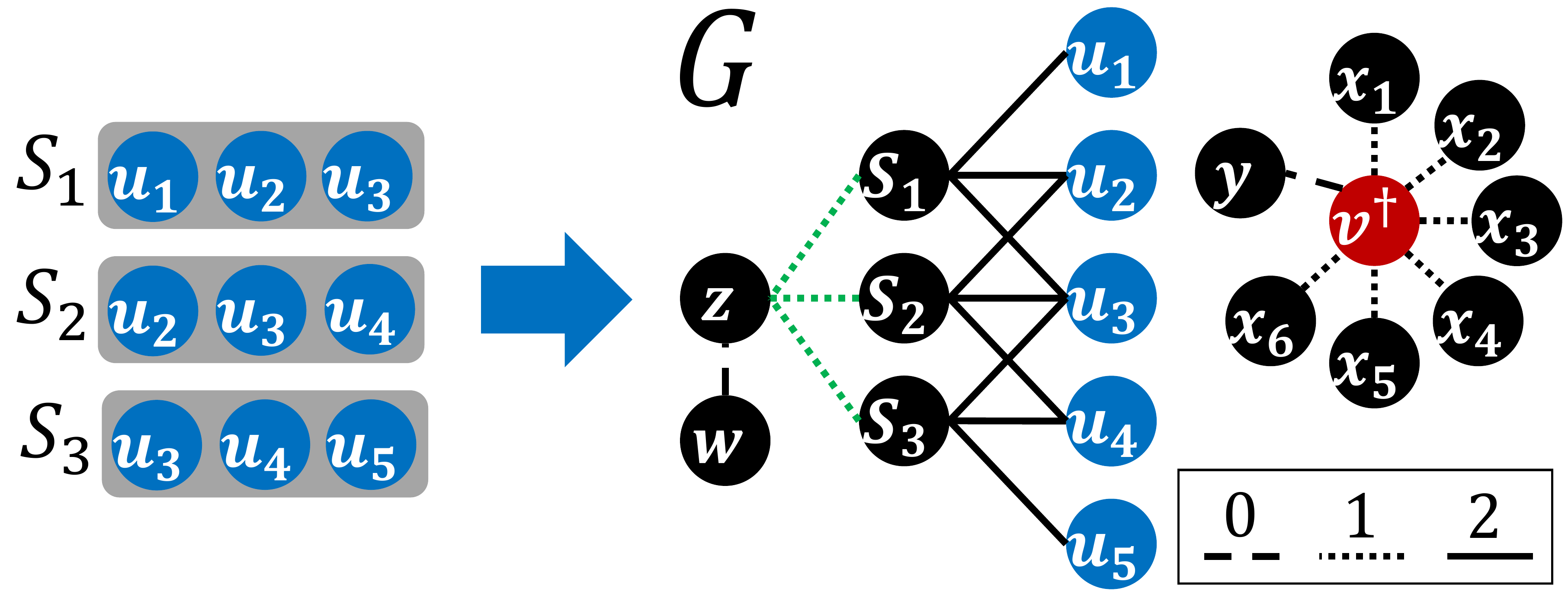}
\caption{
Construction used in the proof of Theorem~\ref{thrm:npc-betweenness}.
On the left there is an instance of the $3$-Set Cover problem, while on the right there is a network $G$ constructed as part of the Temporal Hiding problem instance.
The blue nodes in $G$ correspond to the elements of the universe $U$, while the evader is marked red.
Dashed lines correspond to contacts at moment $0$, dotted lines to contacts at moment $1$, and solid lines to contact at moment $2$.
Green lines depict the contacts allowed to be added.
}
\label{fig:npc-betweenness}
\end{figure}

Now, consider the instance $(G,\vs,\FA,\FR,b,c,\thr)$ of the Temporal Hiding problem, where:
\begin{itemize}
\item $G$ is the temporal network we just constructed,
\item $\vs$ is the evader,
\item $\FA = \{(z,S_i,1) : S_i \in V\}$,
\item $\FR = \emptyset$, i.e., none of the edges can be removed,
\item $b=k$,
\item $c$ is the temporal betweenness centrality,
\item $\thr=1$ is the safety threshold.
\end{itemize}

Since $\FR = \emptyset$, for any solution to the constructed instance of the Temporal Hiding problem we must have $\Rem = \emptyset$.
Hence, we will omit mentions of $\Rem$ in the remainder of the proof, and we will assume that a solution consists just of $\Add$.

First, let us analyze the temporal betweenness centrality values of the nodes in $G$ after the addition of any $A \subseteq \FA$.
Let $q_A$ denote the number of nodes $u_j \in U$ such that $\al(z,u_j)<\infty$, i.e., the number of nodes $u_j \in U$ such that $z$ is connected (via the contacts from $A$) with at least one node $S_i$ connected with $u_j$ (notice that there are no other possibilities for $z$ to have finite average latency to a node in $U$).
The temporal betweenness centrality values can be computed as follows:
\begin{itemize}
\item $\cbetw(G \cup A, \vs) = \frac{|U|+k-1}{(n-1)(n-2)T}$, as it belongs to the shortest temporal paths from $y$ to all $|U|+k-1$ nodes $x_i$,
\item $\cbetw(G \cup A, z) = \frac{|A|+q_A}{(n-1)(n-2)T}$, as it belongs to the shortest temporal paths from $w$ to all $|A|$ nodes $S_i$ it is connected with and to all $q_A$ nodes $u_i$ that can be reached from it,
\item $\cbetw(G \cup A, S_i) \leq \frac{6}{(n-1)(n-2)T} \leq \cbetw(G \cup A, \vs)$, for any $i$, as $S_i$ can belong to the shortest temporal paths from nodes in $\{z,w\}$ to the three nodes in $U$ that it is connected with,
\item $\cbetw(G \cup A, w) = \cbetw(G \cup A, y) = \cbetw(G \cup A, u_i) = \cbetw(G \cup A, x_i) = 0$, for any $i$, as none of these nodes belong to the shortest temporal paths between any pairs of other nodes.
\end{itemize}

Since the safety threshold is $\thr=1$, only one node has to have greater temporal betweenness centrality than $\vs$ after he addition of a given $A \subseteq \FA$ in order for said $A$ to be solution to the constructed instance of the Temporal Hiding problem.
However, notice that the only node that can have greater temporal betweenness centrality than $\vs$ (and therefore higher position in the ranking) is $z$.
In other words, a given $A \subseteq \FA$ is a solution to the constructed instance of the Temporal Hiding problem if and only if we have $\cbetw(G \cup A, z) > \cbetw(G \cup A, \vs)$.

We will now show that the constructed instance of the Temporal Hiding problem has a solution if and only if the given instance of the $3$-Set Cover problem has a solution.

Assume that there exists a solution to the given instance of the $3$-Set Cover problem, i.e., a subset $S^* \subseteq S$ of size $k$ the union of which is the universe $U$.
We will show that $\Add = \{z\} \times S^* \times \{1\}$ (i.e., connecting $z$ with nodes corresponding to all sets in $S^*$) is a solution to the constructed instance of the Temporal Hiding problem.
First, notice that after the addition of $\Add$ there exists a time-respecting path from $z$ to every node $u_j \in U$, leading through the node corresponding to an element $S_i \in S^*$ containing $u_j$, with which $z$ is now connected.
Hence, we have that $q_{\Add} = |U|$ and $|\Add|=k$, which gives us:
\begin{align*}
\cbetw(G \cup \Add, z) = \frac{|U|+k}{(n-1)(n-2)T} &> \frac{|U|+k-1}{(n-1)(n-2)T} \\
& = \cbetw(G \cup \Add, \vs).
\end{align*}
Therefore, after the addition of $\Add$ node $z$ has greater temporal betweenness centrality than the evader.
We showed that if there exists a solution to the given instance of the $3$-Set Cover problem, then there also exists a solution to the constructed instance of the Temporal Hiding problem.

Assume that there exists a solution $\Add$ to the constructed instance of the Temporal Hiding problem.
We will show that $S^* = \{S_i \in S : (z,S_i,1) \in \Add\}$ covers the universe $U$.
Let us compute the difference between $\cbetw(G \cup \Add, \vs)$ and $\cbetw(G \cup \Add, z)$:
\begin{align*}
\cbetw(G \cup \Add, z)-\cbetw&(G \cup \Add, \vs) \\
&= \frac{q_{\Add}+|\Add|-|U|-k+1}{(n-1)(n-2)T}.
\end{align*}
Since $\Add$ is a solution, $z$ must have greater temporal betweenness centrality than $\vs$, implying $\cbetw(G \cup \Add, z)-\cbetw(G \cup \Add, \vs) > 0$, which gives us:
\[
q_{\Add}+|\Add|+1 > |U|+k.
\]
Notice however that $|\Add| \leq k$ (since the evader's budget is $b = k$) and that $q_{\Add} \leq |U|$ (since $q_{\Add}$ is the number of nodes in $U$ at finite average latency from $z$).
Therefore, we must have $|\Add|=k$ and $q_{\Add}=|U|$, which means that after the addition of $\Add$ for every $u_j \in U$ we have a node $S_i \in S$ connected to both $z$ and $u_j$ (there is no other way for $z$ to have a finite average latency to $u_j$).
Consequently, every element of the universe $u_j \in U$ is covered by at least one set $S_i \in S^*$, as $z$ is connected with a node $S_i$ only if it belongs to $S^*$, and $S_i$ is connected with $u_j$ only if it contains $u_j$ in the $3$-Set Cover problem instance.
We showed that if there exists a solution to the constructed instance of the Temporal Hiding problem, then there also exists a solution to the given instance of the $3$-Set Cover problem.
This concludes the proof.
\end{proof}

\begin{algorithm}[tb!]
\caption{Finding a $2$-approximate solution for the Minimum Temporal Hiding problem given the degree temporal centrality.
}
\label{alg:appr-degree}
\begin{algorithmic}[1]

\small

\Input{Temporal networks $G=(V,\CC,T)$, evader $\vs$, set of edges allowed to be added $\FA$, set of edges allowed to be removed $\FR$, safety threshold $\thr$.}

\Output{Solution $(\Add,\Rem)$ to the instance $(G,\vs,\FA,\FR,\cclos,\thr)$ of the Temporal Hiding problem or $\perp$ if there is no solution.}

\State Let $\langle \!v \rangle_{i=1}^{n\!-\!1}\!$ be a sequence of all nodes in $\!V\!\!$ other than $\vs$
\State $\FA \gets \FA \setminus (\{\vs\}\!\times\!V\!\times\!\TT)$ \label{ln:alg1-2}
\State $\FR \gets \FR \cap (\{\vs\}\!\times\!V\!\times\!\TT)$ \label{ln:alg1-3}
\For {$z \gets 0, \ldots, |\FR|$} \label{ln:alg1-4}
	\State $X_i^{z}[\theta,q] \gets \infty$ for every $i,\theta,q$
	\State $X_0^{z}[0,0] \gets 0$
	\For {$i \gets 1, \ldots, n-1$} \label{ln:alg1-7}
		\State $\phi_i \gets \left|\FA \cap \left( \{v_i\}\!\times\!V\!\times\!\TT\!\right)\right|$
		\For {$r \gets 0, \ldots, \left| \FR \cap \left( \{\vs\}\!\times\!\{v_i\}\!\times\!\TT \right) \right|$} \label{ln:alg1-8}
			\State $a \gets \max \left( 0, |\Nb_G(\vs)|\!-\!z\!+\!1\!-\!|\Nb_G(v_i)|\!+\!r \right)\!$ \label{ln:alg1-9}
			\For {$\theta \gets 0, \ldots, \min(i-1,\thr)$} \label{ln:alg1-10}
				\For {$q \gets 0, \ldots, z$} \label{ln:alg1-11}
					\If {$a\!\!\leq\!\!\phi_i\!\land\!X_{i\!-\!1}^{z}\![\theta,\!q]\!\!+\!\!a\!\!<\!\!X_i^{z}\![\theta\!\!+\!\!1,\!q\!\!+\!\!r]$}$\!\!\!$ \label{ln:alg1-12}
						\State $X_i^{z}[\theta+1,q+r] \gets X_{i\!-\!1}^{z}[\theta,q]+a$ \label{ln:alg1-13}
						\State $Y_i^{z}[\theta+1,q+r] \gets (a,r)$ \label{ln:alg1-14}
					\EndIf
					\If {$a\!>\!0 \land X_{i\!-\!1}^{z}[\theta,\!q]\!<\!X_i^{z}[\theta,\!q\!+\!r]$} \label{ln:alg1-15}
						\State $X_i^{z}[\theta,q+r] \gets X_{i\!-\!1}^{z}[\theta,q]$ \label{ln:alg1-16}
						\State $Y_i^{z}[\theta,q+r] \gets (0,r)$ \label{ln:alg1-17}
					\EndIf
				\EndFor				
			\EndFor
		\EndFor
	\EndFor
	\State $C_z \gets X_{n-1}^z[\thr,z] + 2z$ \label{ln:alg1-18}
\EndFor
\State $z^* \gets \argmin_{z}C_z$ \label{ln:alg1-19}
\If {$C_{z^*} = \infty$} \label{ln:alg1-20}
	\State \Return $\perp$ \label{ln:alg1-21}
\EndIf
\State $(\Add,\Rem) \gets (\emptyset, \emptyset)$ \label{ln:alg1-22}
\State $x^* \gets C_{z^*}\!-\!2z^*$
\State $\theta^* \gets \thr$
\State $q^* \gets z^*$
\For {$i \gets n-1, \ldots, 1$}
	\State $(a,r) \gets Y_i^{z^*}[\theta^*,q^*]$
	\State $\Add \gets \Add \cup \select \left( a, \FA \cap \left( \{v_i\}\!\times\!V\!\times\!\TT \right) \right)$
	\State $\Rem \gets \Rem \cup \select \left( r, \FR \cap \left( \{\vs\}\!\times\!\{v_i\}\!\times\!\TT\right) \right)$
	\State $x^* \gets x^*\!-\!a$
	\State $q^* \gets q^*\!-\!r$
	\If {$|\Nb_G(v_i)|\!+\!a\!-\!r > |\Nb_G(\vs)|\!-\!z^*$}
		\State $\theta^* \gets \theta^*\!-\!1$
	\EndIf
\EndFor
\State \Return $(\Add,\Rem)$ \label{ln:alg1-34}
\end{algorithmic}
\end{algorithm}

\begin{theorem}
\label{thrm:appr-degree}
Algorithm~\ref{alg:appr-degree} is a $2$-approximation for the Minimum Temporal Hiding problem given the degree temporal centrality.
The bound is tight.
\end{theorem}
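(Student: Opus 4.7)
My plan is to verify the $2$-approximation guarantee through a structural analysis of the DP and then exhibit a tight family. First I will argue that the pruning of $\FA$ and $\FR$ performed on lines~\ref{ln:alg1-2}--\ref{ln:alg1-3} is without loss of optimality: for the degree centrality, adding a contact incident to $\vs$ raises both endpoints' degrees by $1$ and preserves every pairwise gap, so it never helps another node surpass $\vs$; symmetrically, removing a contact $(u,w,t)$ with $u, w \neq \vs$ only lowers the degrees of $u$ and $w$ while leaving $\vs$'s degree unchanged, which again never helps. Hence some optimum $(\Add^*, \Rem^*)$ uses only additions from the pruned $\FA$ and only removals from the pruned $\FR$. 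With that reduction in place, I will read off the DP semantics: fixing the total removal count $z$, the evader's final degree is $|\Nb_G(\vs)| - z$, and node $v_i$ surpasses $\vs$ after removing $r$ of its $\vs$-contacts and adding $a$ non-$\vs$ contacts incident to it iff
\[
a \geq \max(0, |\Nb_G(\vs)| - z + 1 - |\Nb_G(v_i)| + r),
\]
matching the expression on line~\ref{ln:alg1-9}. An induction on $i$ along the transitions of lines~\ref{ln:alg1-12}--\ref{ln:alg1-17} then shows that $X_{n-1}^z[\thr, z]$ equals the minimum of $\sum_{v_i \in T} a_i$ over target sets $T$ of size $\thr$, distributions $(r_i)$ of the $z$ removals with $\sum_i r_i = z$, and per-node counts $a_i \leq \phi_i$ satisfying the above bound.

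Next I will compare the DP value to the optimum. Writing $a_i^* = |\Add^* \cap (\{v_i\}\times V\times\TT)|$ and $r_i^* = |\Rem^* \cap (\{\vs\}\times\{v_i\}\times\TT)|$, the configuration $(T^*, (r_i^*))$ corresponding to the optimal surpassing set is feasible for the DP at $z = z^*$, and since each added contact has exactly two endpoints,
\[
X_{n-1}^{z^*}[\thr, z^*] \;\leq\; \sum_{v_i \in T^*} a_i^* \;\leq\; \sum_{v_i \in V} a_i^* \;=\; 2|\Add^*|.
\]
Hence $C_{z^*} = X_{n-1}^{z^*}[\thr, z^*] + 2z^* \leq 2(|\Add^*| + z^*) = 2\,\mathrm{OPT}$, and $C_{z^{**}} \leq 2\,\mathrm{OPT}$ for the minimizer $z^{**}$ chosen on line~\ref{ln:alg1-19}. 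The backtracking phase then assembles $\Add$ by unioning $a_i$ non-$\vs$ contacts incident to each marked $v_i$ and $\Rem$ by unioning $r_i$ $\vs$-incident removals; each marked $v_i$ retains at least $a_i$ incident additions in $\Add$, so by the DP semantics the output is feasible, and $|\Add| + |\Rem| \leq \sum_i a_i + z^{**} = X_{n-1}^{z^{**}}[\thr, z^{**}] + z^{**} \leq C_{z^{**}} \leq 2\,\mathrm{OPT}$.

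For tightness I will exhibit an instance parameterized by $k$: let $\vs$ and $v_1, \ldots, v_{2k}$ all have the same pre-existing degree $d$, let $\FA$ consist of the matching $\{(v_{2j-1}, v_{2j}, 0) : 1 \leq j \leq k\}$ together with an exclusive contact $(v_i, u_i, 0)$ to a fresh auxiliary node $u_i$ for each $i$, set $\FR = \emptyset$ and $\thr = 2k$. The optimum of cost $k$ uses the matching (each edge simultaneously pushing two targets past $\vs$), whereas the DP computes $a_i = 1$ for every target and, under an adversarial tie-break in $\select$ that prefers the exclusive contacts, outputs $\Add$ of size $2k$, giving a ratio of exactly $2$. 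The main obstacle I anticipate is the semantic analysis of the DP in the first paragraph: verifying that the two branches on lines~\ref{ln:alg1-12}--\ref{ln:alg1-17} correctly enumerate every feasible $(T, (r_i), (a_i))$, with special attention to the corner case $a = 0$ (in which $v_i$ already surpasses $\vs$ after removals and is effectively forced into the "marked" branch) and to the consistent bookkeeping of $q = \sum_i r_i$ across the nested loops; once that characterization is pinned down, the rest reduces to the accounting above.
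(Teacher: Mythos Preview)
Your proposal is correct and follows essentially the same argument as the paper: both prune $\FA,\FR$ without loss of optimality, interpret the DP as optimizing over ``stubs'' (half-contacts) so that its value lower-bounds $2|\Add^*|$ (the paper phrases this as allowing any two stubs in $\FA$ to be paired), and then bound the reconstructed solution by the stub count plus $z$. The only notable difference is the tightness construction: the paper exhibits a single small instance (its Figure~\ref{fig:appr-degree}) where the node ordering in $\langle v\rangle$ forces two contacts when one suffices, whereas you give a parameterized family in which adversarial tie-breaking in $\select$ picks the exclusive contacts over the matching; both are valid ways to witness ratio exactly $2$.
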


\begin{proof}
First, notice that removing contacts that are not incident with $\vs$ and adding edges that are incident with $\vs$ is always suboptimal, as in case of both these changes the difference between the degree of the evader and all other nodes either increases or stays the same.
Hence, no solution of the optimal size will include these types of modification and we exclude them in lines \ref{ln:alg1-2}-\ref{ln:alg1-3}.

Let the cost of a given solution be expressed by a number of stubs (half-contacts), i.e., twice the number of contacts.
In lines \ref{ln:alg1-4}-\ref{ln:alg1-19} the algorithm computes the minimal cost of a solution that satisfies the safety threshold, under assumption that we are allowed to connect any two stubs from contacts appearing in $\FA$ (thus we are guaranteed that the actual optimum is greater or equal).
It does so using the dynamic programming technique.
The loop in line~\ref{ln:alg1-4} iterates over the number of contacts removed from the network.
The algorithm fills the arrays $X$ and $Y$ in order to identify $C_z$, the optimal cost of solution that satisfies the safety thresholds while removing $z$ contacts from the network.
The element $X^z_i[\theta,q]$ is minimal number of stubs that have to added to nodes $v_1,\ldots,v_i$ so that $\theta$ of them have greater degrees than the evader, assuming that we removed $q$ stubs incident with them and that the degree of the evader is $|\Nb_G(\vs)|-z$.
The element $Y^z_i[\theta,q]$ stores the number of stubs that are added to and removed from node $v_i$ to achieve the solution with the cost $X^z_i[\theta,q]$.
The loop in line~\ref{ln:alg1-7} iterates over all nodes in the network other than the evader, while the loop in line~\ref{ln:alg1-8} iterates over the number of stubs removed from the node $v_i$.
The value $a$ computed in line is the number of stubs that have to added to node $v_i$ in order for it to contribute towards the safety threshold, i.e., in order for it to have greater degree than the evader.
The loops in lines~\ref{ln:alg1-10} and~\ref{ln:alg1-11} iterate over all values of $\theta$ and $q$ valid for arrays $X^z_{i-1}$ and $Y^z_{i-1}$, and based on them fill arrays $X^z_i$ and $Y^z_i$.
If the addition of the required number of $a$ stubs is possible (test in line~\ref{ln:alg1-12}) then we record a solution where the node $v_i$ contributes towards the safety threshold (value $\theta+1$ in lines~\ref{ln:alg1-13}-\ref{ln:alg1-14}).
In lines~\ref{ln:alg1-16}-\ref{ln:alg1-17}) we record a solution where the node $v_i$ does not contribute towards the safety threshold, but is only used to decrease the degree of the evader.
We do it only if counting towards the threshold actually required adding any stubs (test in line~\ref{ln:alg1-15}).
Finally, in line~\ref{ln:alg1-18} we compute the cost of the optimal solution that removes $z$ contacts from the network, while satisfying the safety threshold $\thr$.

We showed that $C_{z^*}$ is the minimal cost of a solution that satisfies the safety threshold, under assumption that we are allowed to connect any two stubs from contacts appearing in $\FA$.
If no solution was found (which is tested in line~\ref{ln:alg1-20}), then the algorithm returns $\perp$ in line~\ref{ln:alg1-21}.
Otherwise, the algorithm constructs the solution in lines~\ref{ln:alg1-22}-\ref{ln:alg1-34}.
Since we removed all contact including $\vs$ from $\FA$, any solution that removes all contacts from the optimal solution and adds all the stubs from the optimal solution satisfies the safety threshold.
What is more, such a solution contains at most twice as many contacts as the optimum (as it is possible that each stub in $\FA$ from the optimal solution gets connected to a stub not appearing in the optimal solution).
Hence, the algorithm is a $2$-approximation.
The complexity of the algorithm is $\mathcal{O}(|\FR|^3 n \thr)$.

\begin{figure}[t]
\centering
\includegraphics[width=.4\linewidth]{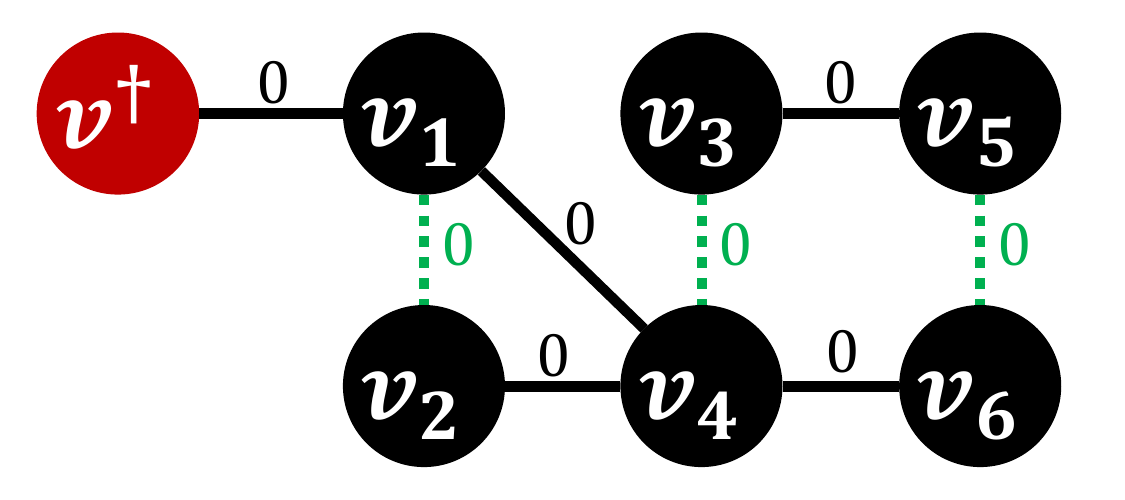}
\caption{
Example of a network for which the approximation ratio of Algorithm~\ref{alg:appr-degree} is exactly $2$.
The red node corresponds to the evader.
Numbers next to edges are the moments in which the contacts occur.
Green dotted lines depict the contacts allowed to be added.
}
\label{fig:appr-degree}
\end{figure}

Finally, to prove the claim about the tight bound, Figure~\ref{fig:appr-degree} presents an example of a network for which the approximation ratio of Algorithm~\ref{alg:appr-degree} is exactly $2$.
Given a safety threshold $\thr=4$, Algorithm~\ref{alg:appr-degree} will add to the network contacts $(v_1,v_2,0)$ and $(v_3,v_4,0)$ (because of the order of nodes in sequence $\langle v \rangle_{i=1}^{n-1}$), whereas the optimal solution is to add the contact $(v_5,v_6,0)$.
This concludes the proof.
\end{proof}

\begin{theorem}
\label{thrm:appr-closeness}
The Minimum Temporal Hiding problem given the closeness temporal centrality cannot be approximated within a ratio of $(1-\epsilon) \ln n$ for any $\epsilon > 0$, unless $P=NP$.
\end{theorem}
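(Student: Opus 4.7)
The approach will be a gap-preserving reduction from the Minimum Set Cover problem, which by Dinur and Steurer cannot be approximated within $(1-\epsilon)\ln |U|$ for any $\epsilon > 0$ unless $P = NP$. The starting point is the construction already used in the NP-hardness proof of Theorem~\ref{thrm:npc-closeness}; there, both the size of the padding set $\{y_1,\dots,y_{|S|-k}\}$ and the budget $b=k$ were tied to the Set Cover parameter $k$ from the decision instance. In the minimum version there is no budget, so the plan is to modify the gadget so that the minimum number of additions needed to push $z$ ahead of $\vs$ in the closeness ranking equals exactly the minimum set cover size $k^*$.

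The first step is to replace each $u_i \in U$ with $M$ ``twin'' copies $u_i^{(1)},\dots,u_i^{(M)}$, each contacting a given $S_l$ at moment $1$ precisely when $u_i \in S_l$, and to resize the padding attached to $\vs$ and to $z$ so that the same arithmetic as in Theorem~\ref{thrm:npc-closeness} reduces the condition $\cclos(G \cup A, z) > \cclos(G \cup A, \vs)$ to $|A| + M \cdot q_A > M(|U|-1) + |S|$, where $q_A$ counts elements of $U$ whose twin copies are reachable from $z$ after the addition of $A \subseteq \FA = \{(z, S_l, 0) : S_l \in S\}$. Choosing $M \geq |S|$ forces this inequality to hold only when $q_A = |U|$, since otherwise $|A| + Mq_A \leq |S| + M(|U|-1)$; in turn this forces $A$ to correspond to an actual set cover, so the smallest such $A$ has size exactly $k^*$.

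The second step is gap transfer. Given a $\rho$-approximate solution $\Add$ for the constructed instance, the family $S^* = \{S_l : (z, S_l, 0) \in \Add\}$ is a set cover of $U$ of size at most $\rho \cdot k^*$, so a $(1-\epsilon')\ln n$-approximation for Minimum Temporal Hiding yields an approximation of the same ratio for Minimum Set Cover. To upgrade the bound from the $\ln|U|$ given directly by Dinur--Steurer to $\ln n$, I would use hard Set Cover instances with $|S| = O(|U|^{\delta})$ for small $\delta > 0$; the resulting network has $n = O(M|U| + |S|) = O(|U|^{1+\delta})$, so $\ln n \leq (1+\delta)\ln|U| + O(1)$, and for any target $\epsilon' > 0$, choosing $\delta$ small enough ensures that a hypothetical $(1-\epsilon')\ln n$-approximation would imply a $(1-\epsilon)\ln|U|$-approximation for some $\epsilon > 0$, contradicting Dinur--Steurer.

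The main obstacle I expect is this last calibration: making sure the reduction stays gap-preserving as $\delta$ is pushed arbitrarily close to $0$, while still satisfying $M \geq |S|$, which is what underlies the all-or-nothing behaviour of the gadget. The remaining technical points---membership in NP, polynomial construction time, and correctness of the centrality bookkeeping for the padding nodes---are essentially identical to those already handled in the proof of Theorem~\ref{thrm:npc-closeness}.
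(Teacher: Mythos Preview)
Your proposal is correct in outline, but the paper reaches the same conclusion by a simpler route that sidesteps both the duplication trick and the calibration step you identify as the main obstacle.

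The paper's key move is not to amplify the $q_A$ term so that it dominates $|A|$, but to eliminate the $|A|$ term altogether. It does this by \emph{pre-connecting} $z$ to every $S_i$ at moment~$1$ (and dropping the $y_i$ padding entirely). Then an added contact $(z,S_i,0)\in\FA$ gives $z$ no new node at finite latency among the $S_j$'s---$z$ could already reach all of them---but it does open the time-respecting path $z\to S_i\to u_j$, because the definition of a path requires strictly increasing times. Hence $\cclos(G\cup A,z)=\frac{2(|S|+q_A)}{n-1}$ depends only on $q_A$, and the inequality $\cclos(z)>\cclos(\vs)$ reduces exactly to $q_A=|U|$. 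The optimum of the Temporal Hiding instance is therefore \emph{equal} to the optimum of the Set Cover instance, with no slack and no need for $M$ copies of each $u_i$.

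Your duplication argument achieves the same equality of optima but at the cost of a blow-up in $n$ and the auxiliary constraint $M\ge|S|$; the paper's construction keeps $n$ linear in $|U|+|S|$. As for the final step---matching $\ln n$ in the hiding instance against $\ln|U|$ in Dinur--Steurer---the paper is actually rather casual about this (it silently identifies the two), whereas you spell out the standard workaround via instances with $|S|$ subpolynomial in $|U|$. So your version is more careful on that point, but overall the paper's gadget is cleaner and avoids the calibration issue you flag as the hard part.
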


\begin{proof}
In order to prove the theorem, we will use the result by Dinur and Steurer~\cite{dinur2014analytical} that the Minimum Set Cover problem cannot be approximated within a ratio of $(1-\epsilon) \ln n$ for any $\epsilon > 0$, unless $P=NP$.

Let $X=(U,S)$ be an instance of the Minimum Set Cover problem, where $U$ is the universe $\{u_1, \ldots, u_{|U|}\}$, and $S$ is a collection $\{S_1, \ldots, S_{|S|}\}$ of subsets of $U$.
The goal here is to find subset $S^* \subseteq S$ such that the union of $S^*$ equals $U$ and the size of $S^*$ is minimal.

First, we will show a function $f(X)$ that based on an instance of the Minimum Set Cover problem $X=(U,S)$ constructs an instance of the Minimum Temporal Hiding problem.
Let us assume that $|S| \geq 3$, note that all instances where $|S| < 3$ can be solved in polynomial time.

Let the temporal network $G(X)=(V,\CC,T)$ be defined as:
\begin{itemize}
\item $V = \{ \vs, z \} \cup U \cup S \cup \bigcup_{i=1}^{|U|+|S|-1} \{x_i\}$,
\item $\CC = \bigcup_{x_i \in V}\{(\vs,x_i,0)\} \cup \bigcup_{S_i \in V}\{(z,S_i,1)\} \cup \bigcup_{u_i \in S_j}\{(u_i,S_j,1)\}$,
\item $T=2$.
\end{itemize}
An example of the construction of the network $G(X)$ is presented in Figure~\ref{fig:appr-closeness}.

\begin{figure}[t]
\centering
\includegraphics[width=.6\linewidth]{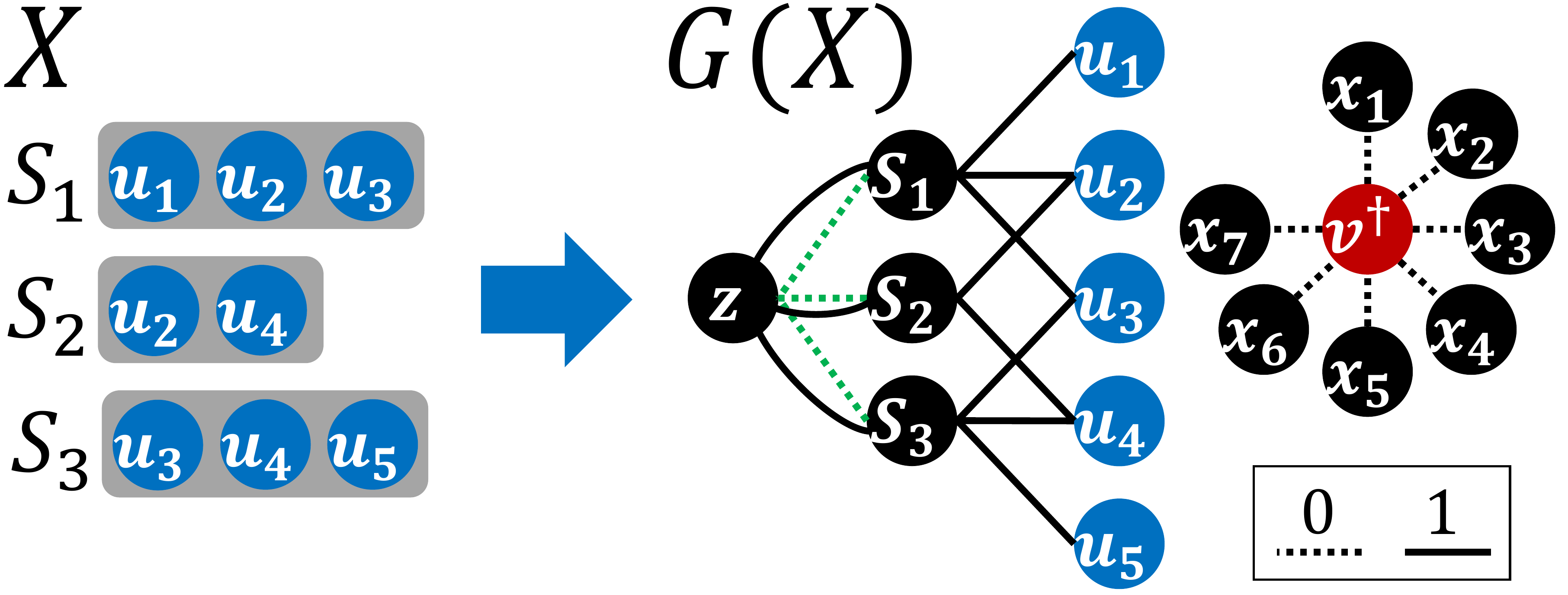}
\caption{
Construction used in the proof of Theorem~\ref{thrm:appr-closeness}.
On the left there is an instance $X$ of the Minimum Set Cover problem, while on the right there is a network $G(X)$ constructed as part of the Minimum Temporal Hiding problem instance.
The blue nodes in $G(X)$ correspond to the elements of the universe $U$, while the evader is marked red.
Dotted lines correspond to contacts at moment $0$, and solid lines to contact at moment $1$.
Green lines depict the contacts allowed to be added.
}
\label{fig:appr-closeness}
\end{figure}

The formula of function $f$ is then $f(X)=(G(X),\vs,\FA,\FR,c,\thr)$, where:
\begin{itemize}
\item $G(X)$ is the temporal network we just constructed,
\item $\vs$ is the evader,
\item $\FA = \{(z,S_i,0) : S_i \in V\}$,
\item $\FR = \emptyset$, i.e., none of the edges can be removed,
\item $c$ is the temporal closeness centrality,
\item $\thr=1$ is the safety threshold.
\end{itemize}

Let $\Add$ be the solution to the constructed instance of the Minimum Temporal Hiding problem $f(X)$ (notice that since $\FR=\emptyset$, we must have $\Rem=\emptyset$).
The function $g$ computing corresponding solution to the instance $X$ of the Minimum Set Cover problem is now $g(X,\Add)= \{S_i \in S : (z,S_i,0) \in \Add\}$, i.e., $S^*$ is the set of all sets $S_i$ such that their corresponding nodes $S_i$ are connected with $z$ via the contacts in $\Add$.

Now, we will show that $g(X,\Add)$ is indeed a correct solution to $X$, i.e., that it covers the entire universe.
Let $q_A$ denote the number of nodes $u_j \in U$ such that $\al_{G(X) \cup A}(z,u_j)<\infty$ for a given set of contacts $A \subseteq \FA$.
Centrality values after adding $A$ to the network $G(X)$ are as follows:
\begin{itemize}
\item $\cclos(G(X) \cup A, \vs) = \frac{2(|U|+|S|-1)}{n-1}$,
\item $\cclos(G(X) \cup A, z) = \frac{2(|S|+q_A)}{n-1}$,
\item $\cclos(G(X) \cup A, S_i) \leq \frac{2(|U|+1)}{n-1} < \cclos(G(X) \cup A, \vs)$ for every $S_i \in V$,
\item $\cclos(G(X) \cup A, u_i) = \frac{2(|\{S_j \in S: u_i \in S_j\}|)}{n-1} \leq \frac{2|S|}{n-1} < \cclos(G(X) \cup A, \vs)$ for every $u_i \in V$,
\item $\cclos(G(X) \cup A, x_i) = \frac{2}{n-1} < \cclos(G(X) \cup A, \vs)$ for every $x_i \in V$.
\end{itemize}
Therefore, the safety threshold $\thr$ is satisfied if and only if $\cclos(G(X) \cup A, z) > \cclos(G(X) \cup A, \vs)$, which is the case when $q_A = |U|$.
Hence, if $\Add$ is a solution to $f(X)$, then for every node $u_j$ there exists a node $S_i$ such that $(z,S_i,0)\in \Add$ and $u_j$ is connected with $S_i$.
However, because of the way we constructed the network $G(X)$, this can only be the case when $u_j \in S_i$ in the given instance $X$ of the Minimum Set Cover problem.
Therefore, for every element of the universe $u_j$ there exists a set $S_i$ such that $u_j \in S_i$ and $S_i$ in $g(X,\Add)$, which implies that $g(X,\Add)$ is a solution to the given instance $X$ of the Minimum Set Cover problem, i.e., it covers the universe.
What is more, since $|g(X,\Add)|=|\Add|$, we also have that the optimal solutions to both instances are of the same size.

Now, assume that there exists an $r$-approximation algorithm for the Minimum Temporal Hiding problem where $r=(1-\epsilon)\ln |\FA|$ for some $\epsilon > 0$.
Let us use this algorithm to solve the constructed instance $f(X)$ and consider solution $g(X,\Add)$ to the given instance $X$ of the Minimum Set Cover problem.
Since the size of the optimal solution is the same for both instances, we obtained an approximation algorithm that solves Minimum Set Cover problem to within $(1-\epsilon)\ln n$ for $\epsilon > 0$.
However, Dinur and Steurer~\cite{dinur2014analytical} showed that the Minimum Set Cover problem cannot be approximated within a ratio of $(1-\epsilon) \ln n$ for any $\epsilon > 0$, unless $P=NP$.
Therefore, such approximation algorithm for the Minimum Temporal Hiding problem cannot exist, unless $P=NP$.
This concludes the proof.
\end{proof}

\begin{theorem}
\label{thrm:appr-betweenness}
The Minimum Temporal Hiding problem given the betweenness temporal centrality cannot be approximated within a ratio of $(1-\epsilon) \ln n$ for any $\epsilon > 0$, unless $P=NP$.
\end{theorem}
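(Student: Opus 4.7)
Following the template of Theorem~\ref{thrm:appr-closeness}, I would prove the claim by establishing an approximation-preserving reduction from Minimum Set Cover, whose inapproximability within $(1-\epsilon)\ln n$ was shown by Dinur and Steurer~\cite{dinur2014analytical}, and then deriving a contradiction from the assumption of a better approximation algorithm for Minimum Temporal Hiding under betweenness centrality. Structurally the argument has three parts: a map $f$ from a Minimum Set Cover instance to a Minimum Temporal Hiding instance, a retrieval map $g$ going back, and a verification that both feasibility and the cost $|\Add|$ translate exactly between the two problems.

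Given an instance $X = (U, S)$ of Minimum Set Cover with $|S| \geq 3$, I would define $f(X) = (G(X), \vs, \FA, \emptyset, \cbetw, 1)$ by adapting the gadget from the NP-hardness proof of Theorem~\ref{thrm:npc-betweenness}, replacing every quantity depending on the unknown budget $k$ with a fixed one, exactly as the passage from Theorem~\ref{thrm:npc-closeness} to Theorem~\ref{thrm:appr-closeness} removed the $y_i$ nodes. Specifically, I would take the vertex set $\{\vs, w, y, z\} \cup U \cup S \cup \{x_1,\ldots,x_{|U|+|S|-1}\}$, install contacts that make $\vs$ lie as intermediate on exactly $|U|+|S|-1$ shortest $y$--$x_i$ paths (yielding a constant evader betweenness), keep the single entry contact $(w,z,0)$ so that every temporal route from $w$ must traverse $z$, and time the $z$--$S_i$ and $S_i$--$u_j$ contacts (both those appearing in $\CC$ and those in $\FA$) so that $z$ is automatically present on every shortest $w$--$S_i$ path regardless of $\Add$, while the sole effect of including $(z,S_i,\cdot) \in \Add$ is to extend $z$'s reachability to each $u_j \in S_i$.

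The retrieval map is $g(X,\Add) = \{S_i : (z,S_i,\cdot) \in \Add\}$, giving $|g(X,\Add)| = |\Add|$. Computing the betweenness of every node after any $\Add \subseteq \FA$, along the lines of the calculations in Theorem~\ref{thrm:npc-betweenness}, I would establish: (i) no node other than $z$ can overtake $\vs$; (ii) $\cbetw(G(X)\cup\Add,z)$ takes the form $\frac{|S|+q_\Add}{(n-1)(n-2)T}$, where $q_\Add$ counts the universe elements reachable from $z$; and (iii) with the evader pinned at $\frac{|S|+|U|-1}{(n-1)(n-2)T}$, the safety threshold $\thr=1$ is met iff $q_\Add = |U|$, i.e.\ iff $g(X,\Add)$ covers $U$. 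Hence the optimal objective values of $X$ and $f(X)$ coincide, and an $(1-\epsilon)\ln n$-approximation for Minimum Temporal Hiding composed with $g$ would yield an $(1-\epsilon')\ln|U|$-approximation for Minimum Set Cover (ratios transfer because $n=|V(G(X))|=O(|U|+|S|)$), contradicting Dinur--Steurer unless $P=NP$.

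The main obstacle is this last engineering step that removes the $|A|$ dependence of $z$'s betweenness. In Theorem~\ref{thrm:npc-betweenness} the relevant quantity was $|A|+q_A$, which together with a fixed budget $b=k$ forced $|A|=k$ and $q_A=|U|$; in the budget-free optimization regime the same formulation would let the solver inflate $|A|$ cheaply and satisfy the threshold without covering $U$, destroying the correspondence with set cover. The fix, mirroring how $\cclos(z) = \frac{2(|S|+q_A)}{n-1}$ was arranged in the closeness approximation proof, is to pre-install contacts so that $z$ lies on every $w$--$S_i$ shortest path no matter which additions are selected, which absorbs the $|A|$ contribution into a fixed $|S|$ term and leaves $q_\Add$ as the only quantity $\Add$ can influence. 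Once this alignment is achieved, the remainder of the argument is the routine Dinur--Steurer composition used in Theorem~\ref{thrm:appr-closeness}.
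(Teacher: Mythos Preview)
Your high-level strategy --- an approximation-preserving reduction from Minimum Set Cover in which $\cbetw(z)$ is engineered to depend only on the coverage count $q_{\Add}$ and not on $|\Add|$ --- is exactly the paper's strategy, and the composition with Dinur--Steurer at the end is carried out the same way. However, the specific gadget you propose does not deliver claim~(i), and the paper in fact uses a different mechanism to eliminate the $|\Add|$-dependence.

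The gap is in bounding $\cbetw(S_i)$. With a single source $w$ and the evader pinned at $\frac{|S|+|U|-1}{(n-1)(n-2)T}$, a set node $S_i$ sits on the shortest $w$-to-$u_j$ and $z$-to-$u_j$ paths for every $u_j\in S_i$ once $(z,S_i,\cdot)\in\Add$, giving it betweenness of order $2|S_i|$. Since Dinur--Steurer requires unrestricted Set Cover (bounded-size Set Cover admits a constant-factor approximation, so you cannot retreat to $3$-Set Cover here), $|S_i|$ may be as large as $|U|-1$; then $2|S_i|$ exceeds $|S|+|U|-1$ whenever $|U|>|S|+1$. In that regime adding the single contact $(z,S_i,1)$ already pushes $S_i$ above $\vs$ and satisfies $\thr=1$ without covering $U$, so $f(X)$ has optimum~$1$ regardless of the Set Cover optimum and the cost correspondence collapses.

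The paper resolves this not by pre-installing $(z,S_i)$ contacts but by replacing the single $w$ with $|U|$ clones $w_1,\ldots,w_{|U|}$, each with $(z,w_i,0)$, and by inserting direct contacts $(w_i,S_j,2)$ so that $z$ lies on \emph{no} shortest $w_i$-to-$S_j$ path at all. Then $\cbetw(z)=\frac{q_{\Add}\,|U|}{(n-1)(n-2)T}$ is a pure coverage term, the evader is set at $\frac{|U|^2-1}{(n-1)(n-2)T}$ via $|U|+1$ nodes $x_i$ and $|U|-1$ nodes $y_j$, and --- under the harmless assumption $|S_i|<|U|$ --- every $S_i$ is capped at $\frac{(|U|+1)(|U|-1)}{(n-1)(n-2)T}=\cbetw(\vs)$. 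The multiplicative $|U|$ on $z$'s side is precisely what lets the evader's target scale with the same bound that controls the $S_i$'s; that scaling is the ingredient your single-$w$ construction is missing.
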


\begin{proof}
In order to prove the theorem, we will use the result by Dinur and Steurer~\cite{dinur2014analytical} that the Minimum Set Cover problem cannot be approximated within a ratio of $(1-\epsilon) \ln n$ for any $\epsilon > 0$, unless $P=NP$.

Let $X=(U,S)$ be an instance of the Minimum Set Cover problem, where $U$ is the universe $\{u_1, \ldots, u_{|U|}\}$, and $S$ is a collection $\{S_1, \ldots, S_{|S|}\}$ of subsets of $U$.
The goal here is to find subset $S^* \subseteq S$ such that the union of $S^*$ equals $U$ and the size of $S^*$ is minimal.

First, we will show a function $f(X)$ that based on an instance of the Minimum Set Cover problem $X=(U,S)$ constructs an instance of the Minimum Temporal Hiding problem.
Let us assume that $\forall_i |S_i| < |U|$, note that all instances where there exists $S_i = U$ can be solved in polynomial time.

Let the temporal network $G(X)=(V,\CC,T)$ be defined as:
\begin{itemize}
\item $V = \{ \vs, z \} \cup U \cup S \cup \bigcup_{i=1}^{|U|} \{w_i\} \cup \bigcup_{i=1}^{|U|+1} \{x_i\} \cup \bigcup_{i=1}^{|U|-1} \{y_i\}$,
\item $\CC = \bigcup_{x_i \in V}\{(\vs,x_i,0)\} \cup \bigcup_{y_i \in V}\{(\vs,y_i,2)\}
\cup \bigcup_{w_i \in V}\{(z,w_i,0)\} \cup \bigcup_{w_i,w_j \in V}\{(w_i,w_j,2)\}$\\ $\cup \bigcup_{w_i,S_j \in V}\{(w_i,S_j,2)\} \cup \bigcup_{S_i \in V}\{(z,S_i,2)\} \cup \bigcup_{S_i,S_j \in V}\{(S_i,S_j,2)\} \cup \bigcup_{u_i \in S_j}\{(u_i,S_j,2)\}$,
\item $T=3$.
\end{itemize}
An example of the construction of the network $G(X)$ is presented in Figure~\ref{fig:appr-betweenness}.

\begin{figure}[t]
\centering
\includegraphics[width=.6\linewidth]{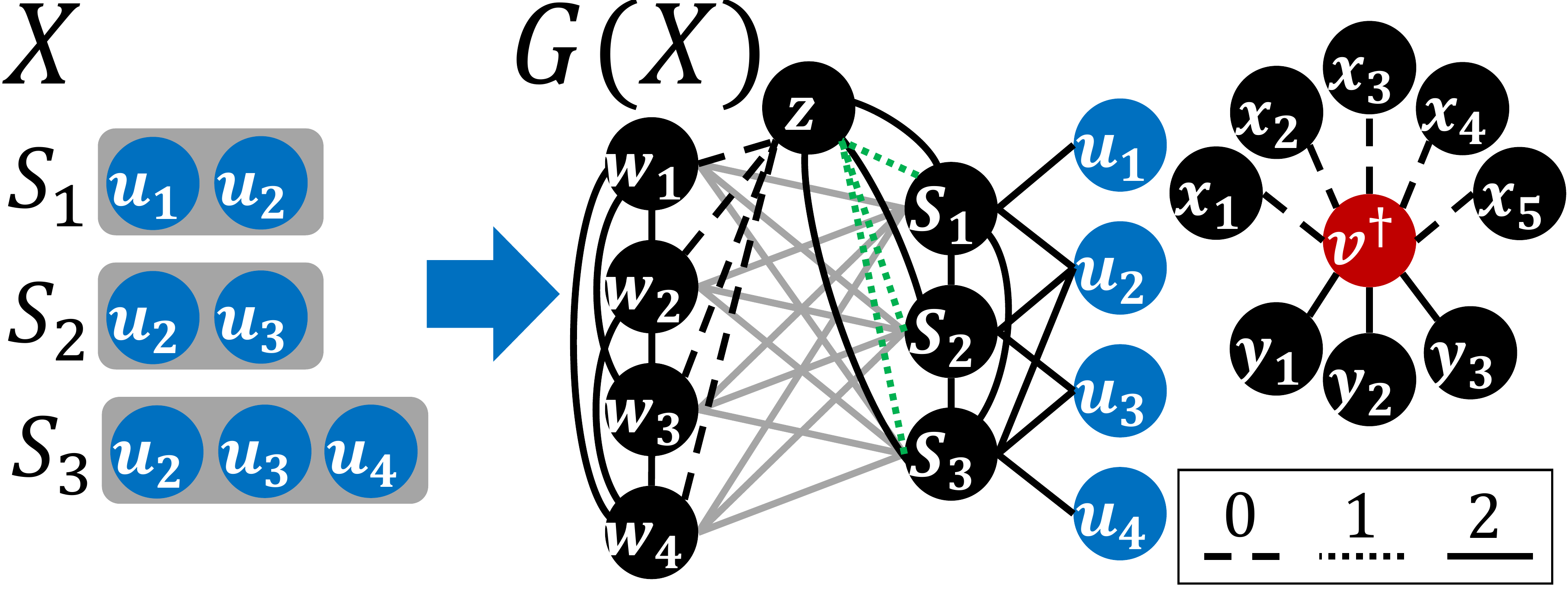}
\caption{
Construction used in the proof of Theorem~\ref{thrm:appr-betweenness}.
On the left there is an instance $X$ of the Minimum Set Cover problem, while on the right there is a network $G(X)$ constructed as part of the Minimum Temporal Hiding problem instance.
The blue nodes in $G(X)$ correspond to the elements of the universe $U$, while the evader is marked red.
Dashed lines correspond to contacts at moment $0$, dotted lines to contacts at moment $1$, and solid lines to contact at moment $2$.
Green lines depict the contacts allowed to be added.
Contacts between nodes $S_i$ and $w_j$ are depicted grey for better readability.
}
\label{fig:appr-betweenness}
\end{figure}

The formula of function $f$ is then $f(X)=(G(X),\vs,\FA,\FR,c,\thr)$, where:
\begin{itemize}
\item $G(X)$ is the temporal network we just constructed,
\item $\vs$ is the evader,
\item $\FA = \{(z,S_i,1) : S_i \in V\}$,
\item $\FR = \emptyset$, i.e., none of the edges can be removed,
\item $c$ is the temporal betweenness centrality,
\item $\thr=1$ is the safety threshold.
\end{itemize}

Let $\Add$ be the solution to the constructed instance of the Minimum Temporal Hiding problem $f(X)$ (notice that since $\FR=\emptyset$, we must have $\Rem=\emptyset$).
The function $g$ computing corresponding solution to the instance $X$ of the Minimum Set Cover problem is now $g(X,\Add)= \{S_i \in S : (z,S_i,0) \in \Add\}$, i.e., $S^*$ is the set of all sets $S_i$ such that their corresponding nodes $S_i$ are connected with $z$ via the contacts in $\Add$.

Now, we will show that $g(X,\Add)$ is indeed a correct solution to $X$, i.e., that it covers the entire universe.
Let $q_A$ denote the number of nodes $u_j \in U$ such that $\al_{G(X) \cup A}(z,u_j)<\infty$ for a given set of contacts $A \subseteq \FA$.
Centrality values after adding $A$ to the network $G(X)$ are as follows:
\begin{itemize}
\item $\cbetw(G(X) \cup A, \vs) = \frac{(|U|+1)(|U|-1)}{(n-1)(n-2)T} = \frac{|U|^2-1}{(n-1)(n-2)T}$, as $\vs$ controls all shortest temporal paths from nodes $x_i$ to $y_i$,
\item $\cbetw(G(X) \cup A, z) = \frac{q_A|U|}{(n-1)(n-2)T}$, as $z$ controls all shortest temporal paths from nodes $w_i$ to nodes $u_j$ that are reachable from $z$,
\item $\cbetw(G(X) \cup A, S_i) \leq \frac{(|U|+1)(|U|-1)}{(n-1)(n-2)T} = \frac{|U|^2-1}{(n-1)(n-2)T} = \cbetw(G(X) \cup A, \vs)$, as $S_i$ can control only paths from nodes $w_i$ and $z$ (there are $|U|+1$ such nodes) to nodes $u_j$ that $S_i$ has contact with (we made an assumption that there are at most $|U|-1$ such nodes),
\item $\cbetw(G(X) \cup A, w_i) = \cbetw(G(X) \cup A, u_i) = \cbetw(G(X) \cup A, x_i) = \cbetw(G(X) \cup A, y_i) = 0 < \cbetw(G(X) \cup A, \vs)$, as these node do not control any shortest paths.
\end{itemize}

Therefore, the safety threshold $\thr$ is satisfied if and only if $\cbetw(G(X) \cup A, z) > \cbetw(G(X) \cup A, \vs)$, which is the case when $q_A = |U|$.
Hence, if $\Add$ is a solution to $f(X)$, then for every node $u_j$ there exists a node $S_i$ such that $(z,S_i,1)\in \Add$ and $u_j$ is connected with $S_i$.
However, because of the way we constructed the network $G(X)$, this can only be the case when $u_j \in S_i$ in the given instance $X$ of the Minimum Set Cover problem.
Therefore, for every element of the universe $u_j$ there exists a set $S_i$ such that $u_j \in S_i$ and $S_i$ in $g(X,\Add)$, which implies that $g(X,\Add)$ is a solution to the given instance $X$ of the Minimum Set Cover problem, i.e., it covers the universe.
What is more, since $|g(X,\Add)|=|\Add|$, we also have that the optimal solutions to both instances are of the same size.

Now, assume that there exists an $r$-approximation algorithm for the Minimum Temporal Hiding problem where $r=(1-\epsilon)\ln |\FA|$ for some $\epsilon > 0$.
Let us use this algorithm to solve the constructed instance $f(X)$ and consider solution $g(X,\Add)$ to the given instance $X$ of the Minimum Set Cover problem.
Since the size of the optimal solution is the same for both instances, we obtained an approximation algorithm that solves Minimum Set Cover problem to within $(1-\epsilon)\ln n$ for $\epsilon > 0$.
However, Dinur and Steurer~\cite{dinur2014analytical} showed that the Minimum Set Cover problem cannot be approximated within a ratio of $(1-\epsilon) \ln n$ for any $\epsilon > 0$, unless $P=NP$.
Therefore, such approximation algorithm for the Minimum Temporal Hiding problem cannot exist, unless $P=NP$.
This concludes the proof.
\end{proof}

\section{Empirical Analysis}

In this section, we present the experimental analysis of the problem using simulations.

\begin{table}[t!]
\caption{Datasets considered in the simulations.
}
\label{tab:datasets}
\centering
\small
\begin{tabular}{lcc}
\hline
Network & $|V|$ & $|\CC|$ \\
\hline
Bluetooth~\cite{madan2012sensing} & 74 & 87491 \\
Call center~\cite{reality_badges} & 52 & 1182 \\
Cambridge~\cite{cambridge} & 187 & 8769 \\
Conference 1~\cite{isella2011s} & 113 & 8218 \\
Conference 2~\cite{cambridge} & 199 & 27165 \\
Conference 3~\cite{stehle2011simulation} & 402 & 28954 \\
Conference 4~\cite{stehle2011simulation} & 361 & 19304 \\
Copenhagen call~\cite{stopczynski2014measuring} & 484 & 1667 \\
Copenhagen SMS~\cite{stopczynski2014measuring} & 535 & 6165 \\
Diary~\cite{read2008dynamic} & 49 & 1427 \\
High school 1~\cite{mastrandrea2015contact} & 312 & 28780 \\
High school 2~\cite{mastrandrea2015contact} & 310 & 35592 \\
High school 3~\cite{mastrandrea2015contact} & 303 & 30383 \\
High school 4~\cite{mastrandrea2015contact} & 295 & 28112 \\
High school 5~\cite{mastrandrea2015contact} & 299 & 26391 \\
Hospital~\cite{vanhems2013estimating} & 75 & 9313 \\
Hospital colocation~\cite{vanhems2013estimating} & 73 & 35200 \\
Intel~\cite{cambridge} & 113 & 7408 \\
Kenya~\cite{kiti2016quantifying} & 52 & 2070 \\
Office~\cite{NWS:9950811} & 92 & 2679 \\
Office colocation~\cite{NWS:9950811} & 95 & 45357 \\
Polish manufacturer email~\cite{michalski2011matching} & 167 & 43360 \\
Primary school 1~\cite{stehle2011high} & 236 & 52339 \\
Primary school 2~\cite{stehle2011high} & 238 & 56313 \\
Reality~\cite{eagle2006reality} & 64 & 4251 \\
Romania~\cite{marin2012exploring} & 42 & 19045 \\
St Andrews~\cite{scott2006crawdad} & 25 & 1483 \\
Undergrads call~\cite{madan2012sensing} & 75 & 3574 \\
Undergrads SMS~\cite{madan2012sensing} & 41 & 758 \\
University couples call~\cite{aharony2011socialfmri} & 126 & 31155 \\
University couples SMS~\cite{aharony2011socialfmri} & 110 & 11962 \\
\hline
\end{tabular}
\end{table}

\subsection{Datasets}

In our simulations we consider a number of real-life temporal network datasets.
All datasets are presented in Table~\ref{tab:datasets}.

Given that the time resolutions of different datasets are vastly different, we perform the following normalization procedure.
We divide the time interval between the first and the last timestamp in the original dataset into $1000$ equal subintervals.
For each pair of nodes we consider them to have contact at time step $t$ if and only if the $t$-th subinterval contains at least one contact in the original dataset.
The length of the time interval of such normalized dataset is then $T = 1000$.

The third column of Table~\ref{tab:datasets} contains the number of contacts after performing the normalization procedure.
Such normalized datasets are then used in the simulations.

\begin{figure*}[t!]
\centering
\includegraphics[width=.9\textwidth]{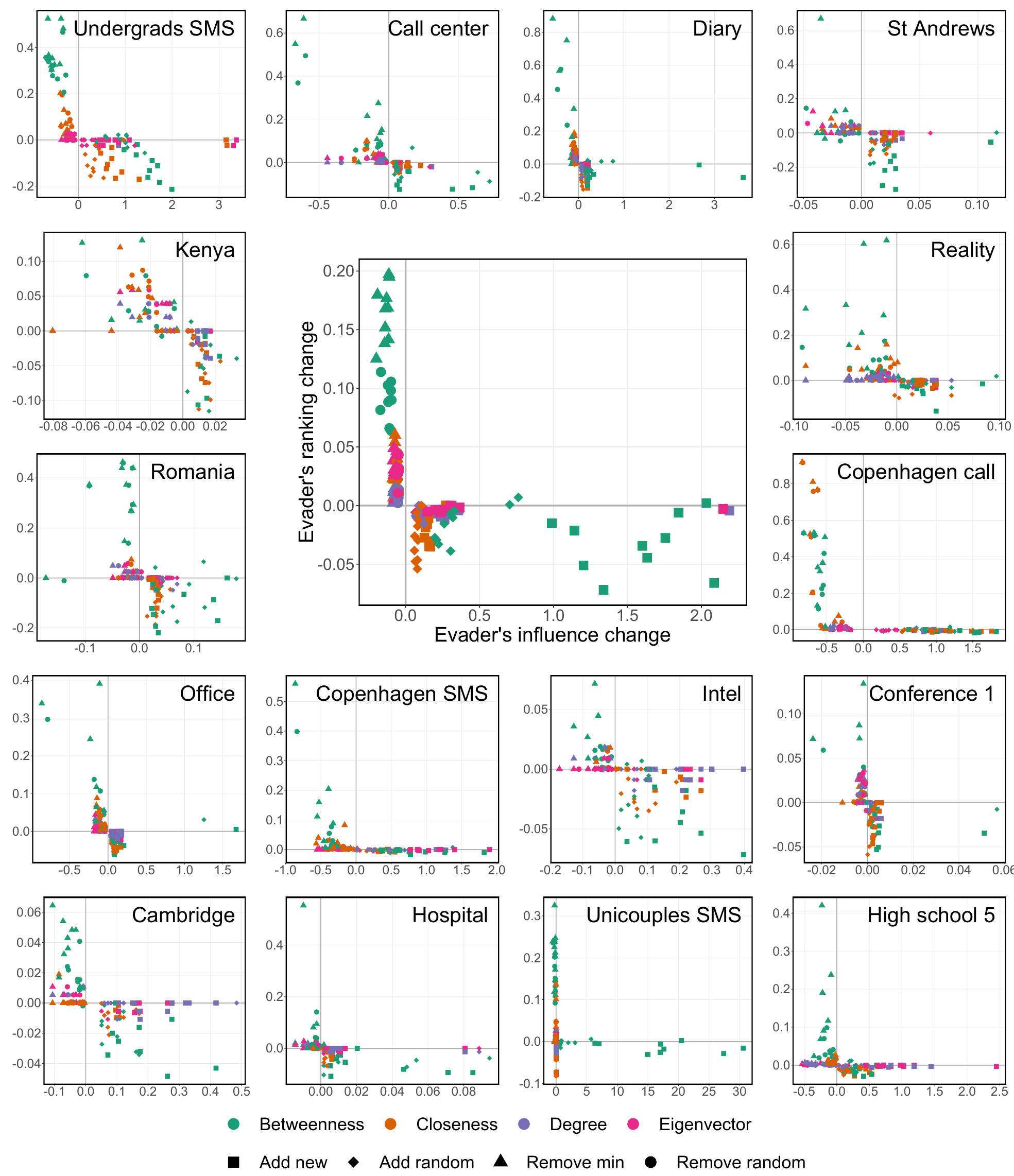}
\caption{
Effects of hiding the top $10$ nodes in each centrality ranking.
Smaller plots present results for different real-life networks. The large central plot presents data averaged over all networks (with each point corresponding to an average over nodes at a particular position in the ranking of a given centrality). The y-axis corresponds to the change in the evader's centrality ranking relative to the number of all nodes (with positive values indicating that the evader becomes better hidden due to the heuristic). The x-axis corresponds to the evader's influence relative to their initial influence. Different colors correspond to centrality measures, while various shapes correspond to heuristics.
}
\label{fig:scatter}
\end{figure*}

\begin{table*}[t!]
\caption{Temporal node descriptors.}
\label{tab:node-descriptors}
\centering
\small
\begin{tabular}{ccl}
\hline
\textbf{Type} &
\textbf{Descriptor} &
\textbf{Definition} \\
\hline
& $\phi_C$ & Fraction of node's contacts when half of all contacts happened \\
& $\phi_T$ & Fraction of node's contacts at half the time $T$  \\
& $\rho_C$ & Fraction of node's edges present when half of all contacts happened \\
& $\rho_T$ & Fraction of node's edges present at half the time $T$ \\
& $\rho_c$ & Fraction of node's edges present at both the first and last $5\%$ of all contacts \\
\multirow{-6}{*}{\makecell{Time\\evolution}} & $\rho_t$ & Fraction of node's edges present at both the first and last $5\%$ of the time $T$ \\
\hline
& $\varepsilon_m$ & Mean of intercontact times over node's edges \\
& $\varepsilon_s$ & Standard deviation of intercontact times over node's edges \\
& $\varepsilon_v$ & Coefficient of variation of intercontact times over node's edges \\
& $\varepsilon_k$ & Skewness of intercontact times over node's edges \\
& $\lambda_m$ & Mean of the number of other contacts between two consecutive contacts \\
& & of a node's edge \\
& $\lambda_s$ & Standard deviation of the number of other contacts between two consecutive \\
& & contacts of a node's edge \\
& $\lambda_v$ & Coefficient of variation of the number of other contacts between two consecutive \\
& & contacts of a node's edge \\
& $\lambda_k$ & Skewness of the number of other contacts between two consecutive \\
\multirow{-12}{*}{\makecell{Edge\\activity}} & & contacts of a node's edge \\
\hline
&  $\nu_m$ & Like $\varepsilon_m$ but for node's contacts \\
&  $\nu_s$ & Like $\varepsilon_s$ but for node's contacts \\
&  $\nu_v$ & Like $\varepsilon_v$ but for node's contacts \\
&  $\nu_k$ & Like $\varepsilon_k$ but for node's contacts \\
& $\eta_m$ & Like $\lambda_m$ but for node's contacts \\
& $\eta_s$ & Like $\lambda^{c}_s$ but for node's contacts \\
& $\eta_v$ & Like $\lambda^{c}_v$ but for node's contacts \\
\multirow{-8}{*}{\makecell{Node\\activity}} & $\eta^{c}_k$ & Like $\lambda^{c}_k$ but for node's contacts \\
\hline
& $\delta$ & Degree of the node \\
& $\delta_R$ & Degree of the node divided by the average degree \\
& $\kappa$ & Number of contacts of the node \\
& $\kappa_R$ & Number of contacts of the node divided by the average number of contacts \\
\multirow{-5}{*}{\makecell{Network\\structure}} & $\zeta$ & Local clustering coefficient of the node \\
\hline
\end{tabular}
\end{table*}

\subsection{Heuristics}

As shown in Section~\ref{sec:theoretical-analysis}, the task of finding an optimal way to hide from temporal centralities is computationally intractable. Hence, we now consider a number of heuristic solutions that add or remove contacts from the network in hope of improving the concealment of the evader:

\begin{description}
\item[Remove Random] Uniformly at random selects a neighbor of the evader, then removes one of the contacts with this neighbor (chosen uniformly at random),
\item[Remove Minimum] Removes one of the contacts (chosen uniformly at random) with a neighbor with the smallest number of contacts with the evader,
\item[Add Random] Uniformly at random selects a neighbor of the evader, then adds another contact with this neighbor (the moment of which is chosen uniformly at random),
\item[Add New] Adds a contact (the moment of which is chosen uniformly at random) between the evader and a second degree neighbor of the evader (chosen uniformly at random), i.e., a node that has at least one common neighbor with the evader, but is not the evader's neighbor.
\end{description}

For both adding and removing contacts we consider two heuristics---one selecting contacts to add or remove uniformly at random, and one with a more strategic approach, focused either on removing all contacts with some neighbors, or on finding new neighbors.

\begin{figure*}[t!]
\centering
\setlength\tabcolsep{0pt}
\begin{tabular}{m{.4\textwidth}m{.4\textwidth}}
\includegraphics[width=\linewidth]{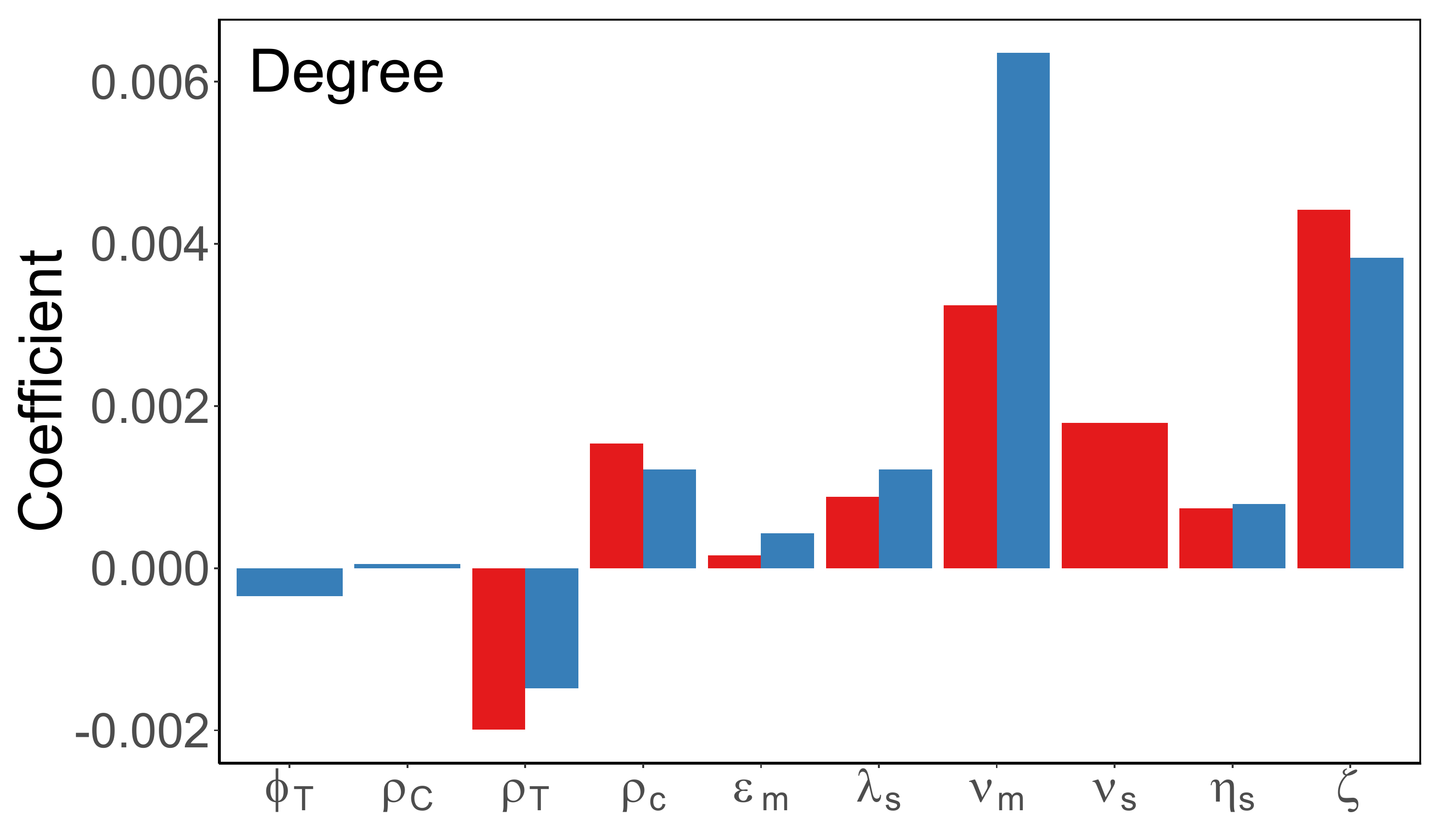} &
\includegraphics[width=\linewidth]{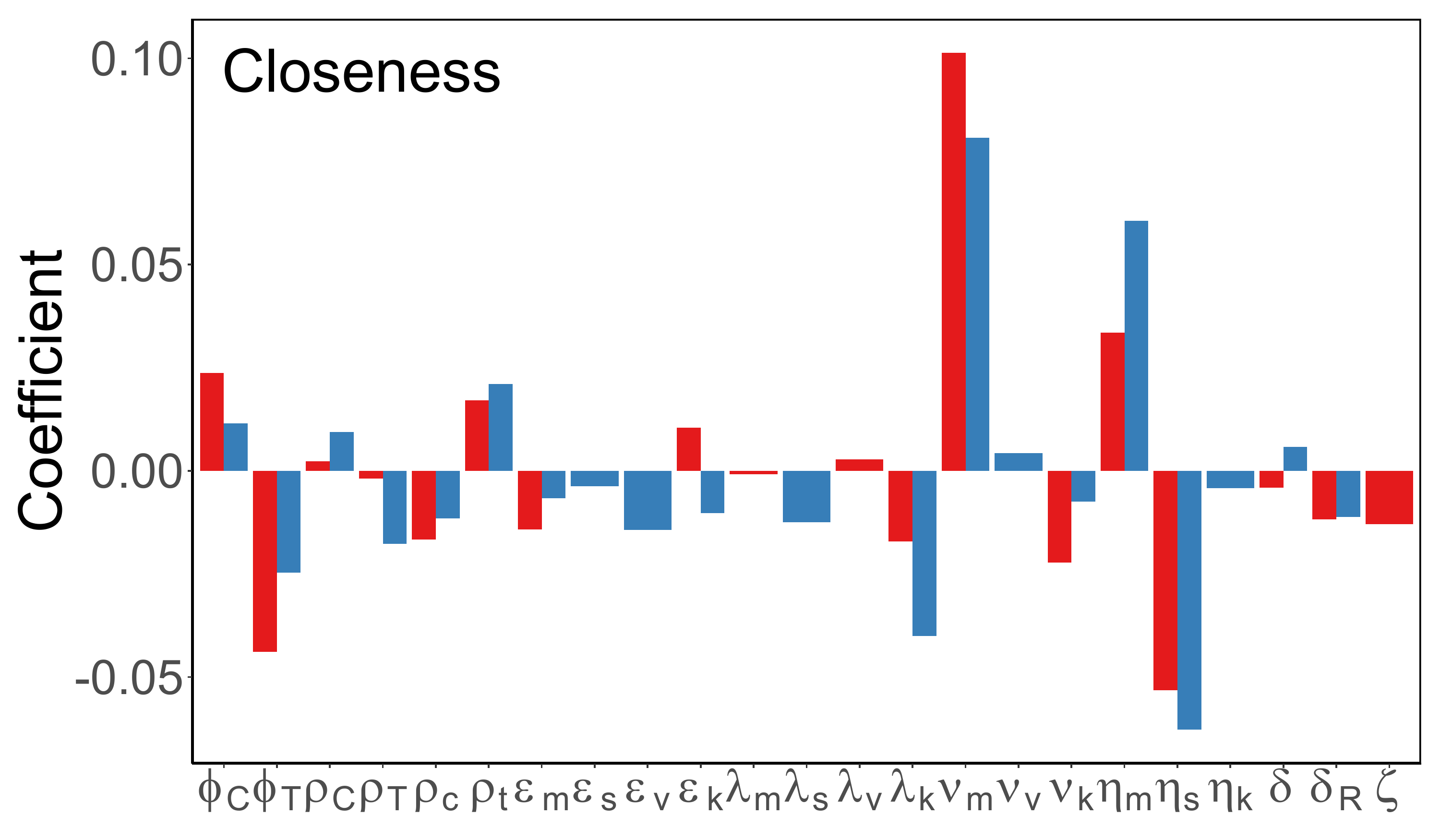} \\
\includegraphics[width=\linewidth]{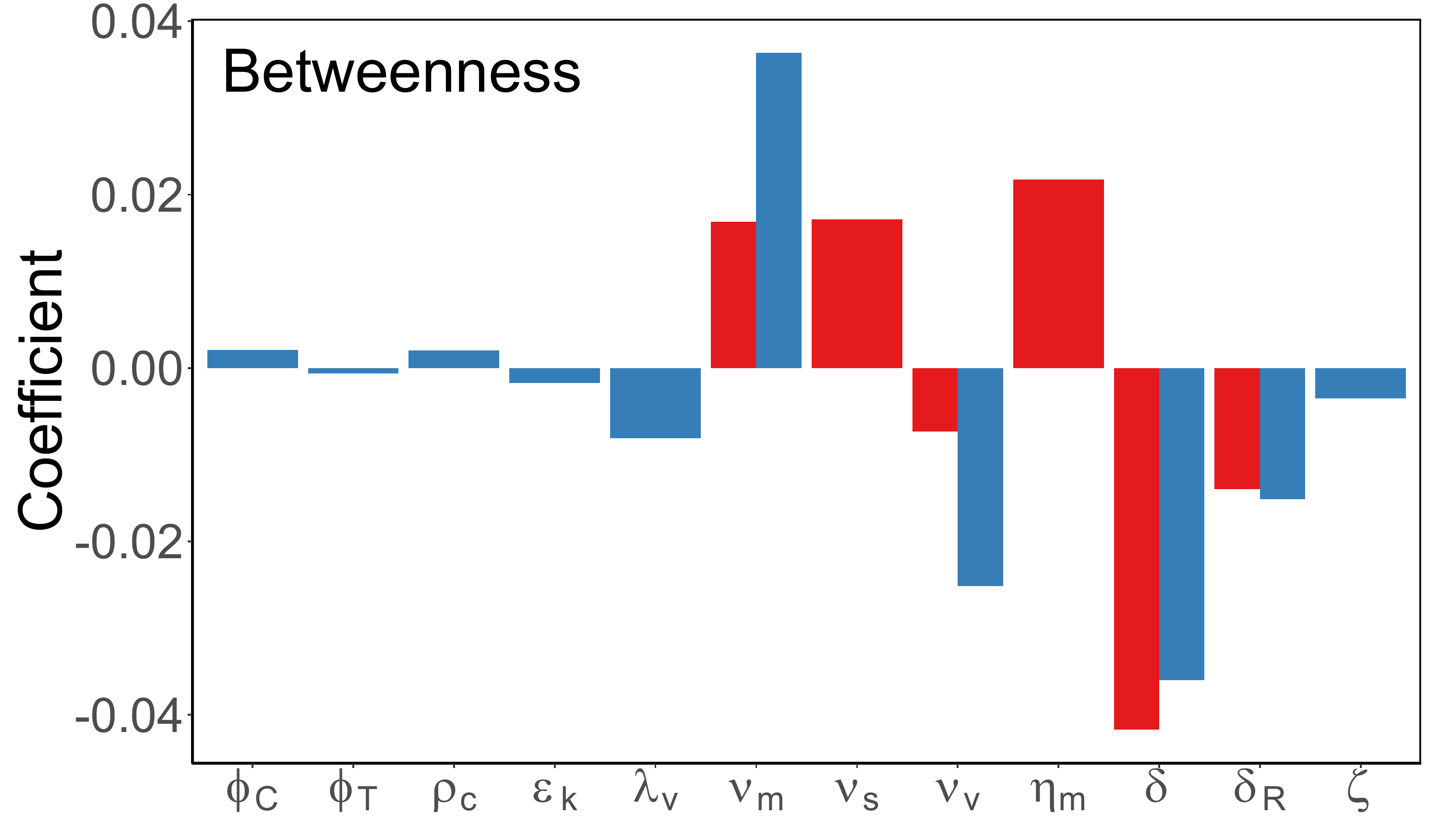} &
\includegraphics[width=\linewidth]{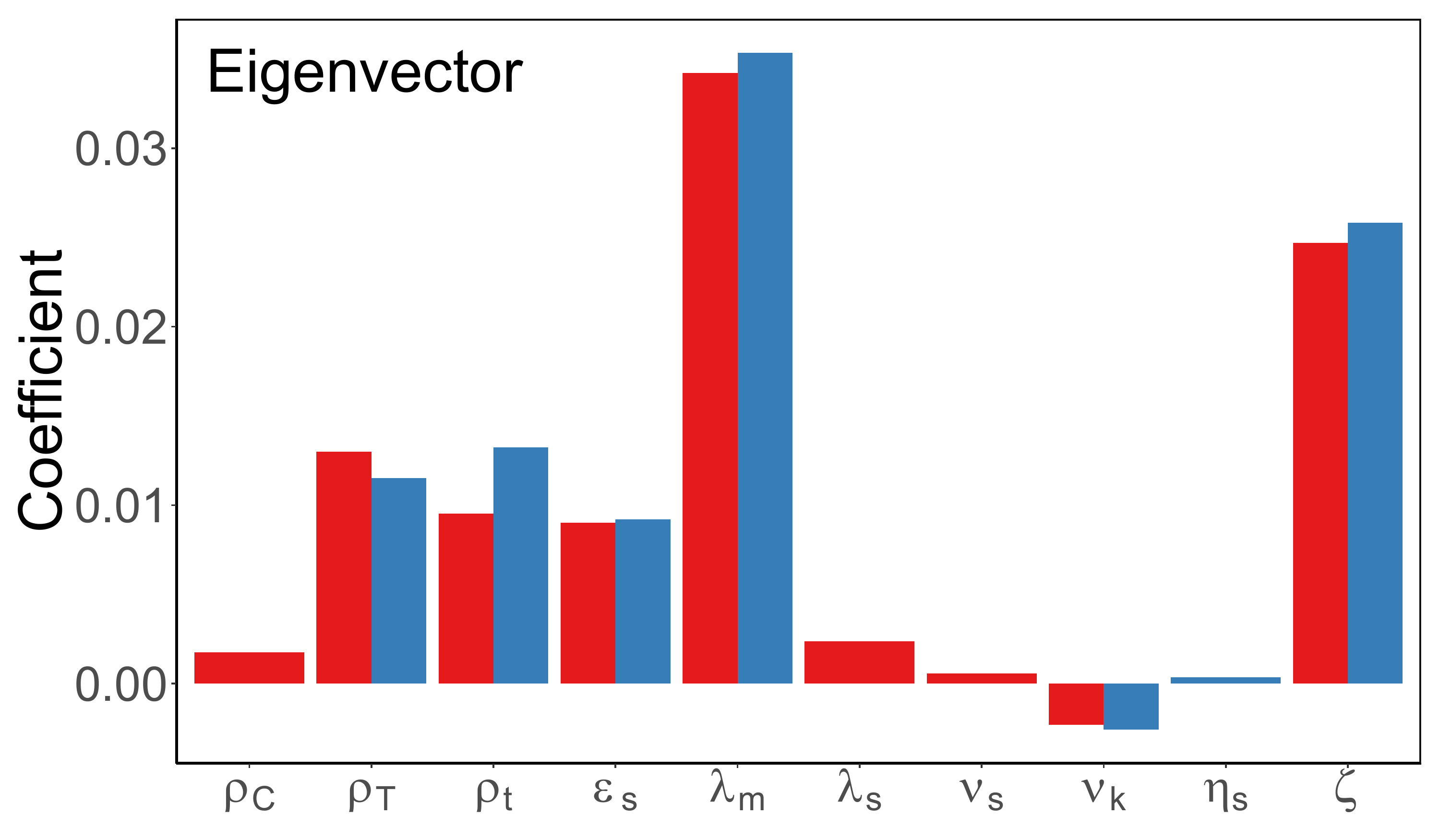}
\\
\multicolumn{2}{c}{\includegraphics[width=.4\linewidth]{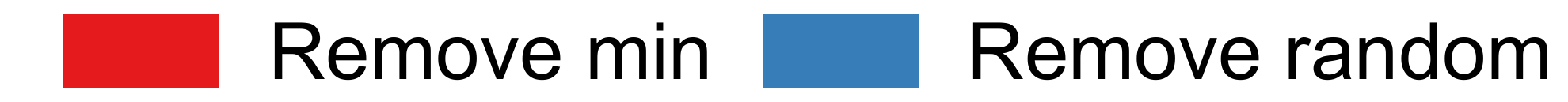}} \\
\end{tabular}
\caption{
Lasso regression coefficients of the node descriptors.
In each plot the x-axis corresponds to descriptors (defined in Table~\ref{tab:node-descriptors}) with non-zero coefficients in the Lasso regression, while the y-axis corresponds to the value of said coefficients. Different colors corresponds to different hiding heuristics.
}
\label{fig:node-lasso-wide}
\end{figure*}

\subsection{Basic Simulations}

Given a temporal network $G$ and a centrality measure $c$, we select $10$ top nodes in the centrality ranking as the potential evaders. We then attempt to hide each such evader $\vs$ using each of the considered heuristics described above. For each heuristic, we add or remove $5\%$ of the evader's contacts (at least ten for nodes with exceptionally few contacts).

We measure the following values before and after the hiding process:
\begin{itemize}
\item the position of $\vs$ in the ranking of $c$,
\item the centrality value of $\vs$ according to $c$,
\item the influence of $\vs$ over the network, measured as the average probability that $\vs$ gets infected if a Susceptible-Infected process starts in a different node times the expected number of infected nodes if a Susceptible-Infected process starts in $\vs$ (we consider the process with the probability of infection $10\%$).
\end{itemize}

We repeated the simulation $100$ times for each network and presented the results as averages.

Figure~\ref{fig:scatter} presents the results of our simulations.
As shown in the figure, in the vast majority of cases, the removal of contacts results in the evader's being more hidden, i.e., dropping in centrality ranking and becoming less influential. At the same time, adding contacts to the network has the opposite effects, i.e., the evader becomes more influential and more exposed to temporal centrality analysis. It suggests a centrality-influence trade-off in temporal networks. When comparing the effects of heuristics performing random changes, be it adding or removing contacts, with those performing strategic manipulation, i.e., Remove Minimum and Add New heuristics, we can see that using strategic heuristics result in a more pronounced effect, i.e., the magnitude of change in centrality ranking or influence value is greater than in the case of their random counterparts. Finally, when comparing different centrality measures, the betweenness centrality is, on average, the most sensitive to manipulation, i.e., executing the hiding process results in the greatest change in the betweenness centrality ranking. The next most sensitive centrality measure is the closeness centrality, with the degree and the eigenvector centrality measures being the most resilient on average.

\subsection{Regression Analysis}

The results of the simulations presented so far give us some insight into how effective different hiding heuristics can be. Still, they do not explain what qualities of the evader determine how successfully it can hide from temporal centralities. To this end, we now perform the regression analysis of the hiding process. Since the analysis presented in the previous section showed that only removal heuristics successfully hide the evader, we will focus our attention on them.

For each node considered an evader in our experiments, we compute several descriptors, i.e., its characteristics relating to its contacts and the network structure in which it is embedded. We consider four different categories of descriptors: time evolution, edge activity, node activity, and network structure. All descriptors are presented in Table~\ref{tab:node-descriptors}.

Figure~\ref{fig:node-lasso-wide} presents the regression coefficients computed using the Lasso regression methods~\cite{tibshirani1996regression}. Since the Lasso regression performs variable selection, not all descriptors appear in the figure (the coefficients for the omitted descriptors are deemed zero). As seen from the figure, the exact values of the coefficients depend on the temporal centrality under consideration. Nevertheless, we can see some trends. The descriptors that show a strong positive correlation with the evader's ability to hide from at least two temporal centralities are the mean of the intercontact time of the evader's contacts $\nu_m$ and the mean of the number of other contacts between two consecutive contacts over node's contacts $\eta_m$. On the other hand, the descriptors that show negative correlation vary greatly between centrality measures. Interestingly, the local clustering coefficient of the evader $\zeta$ has positive regression coefficients values for more locally-oriented centralities, i.e., degree and eigenvector, but negative values for more global centralities, i.e., closeness and betweenness.

\section{Conclusions}
\label{sec:conclusions}

In this article, we analyzed both theoretically and empirically the problem faced by an evader---a member of a temporal social network who wishes to avoid being detected by temporal centrality measures. As part of the theoretical analysis, we defined the decision and the optimization versions of the problem faced by the evader. We proved that the decision version is NP-complete even for very basic temporal centrality measures. As for the optimization version, we presented a $2$-approximation algorithm for the degree temporal centrality while showing that the problem is inapproximable within logarithmic bounds for the closeness and betweenness centrality measures. As part of the empirical analysis, we compared the effectiveness of several hiding heuristics in real-life temporal social networks, finding that removing existing contacts is significantly more effective in avoiding detection by temporal centralities than adding new contacts. Moreover, using regression analysis, we determined that the nodes whose contacts are more distributed throughout the analyzed period are, on average, more successful in obscuring their central position in the network.
More broadly, our study contributes to the literature on hiding from social network analysis tools by considering temporal networks, which are not only more general, but also much more challenging to analyze compared to static networks.

Our findings suggest that while it is computationally infeasible to find an optimal way of hiding from temporal centrality measures, skillful manipulation of one's contacts can still help maintain some semblance of safety even when utilizing heuristic solutions. At the same time, considering the temporal aspects of the network activity offers much more diverse ways of hiding one's importance. Whereas in a static network, a member of the network can only decide who to connect or avoid, in a temporal network, they can also determine the distribution of the contacts, thus creating a much richer strategic space that should be further analyzed in the future research.

\section*{Acknowledgments}

P.H. was supported by JSPS KAKENHI Grant Number JP 21H04595.

\bibliographystyle{abbrv}
\bibliography{bibliography-temporal}

\end{document}